\newif\ifSV
\newif\iffullpage
\iffullpage \usepackage{fullpage}\fi
\newcommand{\Player}{\textsf{P}}
\newcommand{\All}{\textsf{A}} 
\newcommand{\Bel}{\textsf{B}}
\newcommand{\Al}{\mbox{\footnotesize \textsf{A}}}
\newcommand{\Be}{\mbox{\footnotesize \textsf{B}}}
\newcommand{\game}[1]{ \langle #1\rangle}
\newcommand{\strat}[1]{ \llbracket #1\rrbracket}
\newcommand{\prof}[1]{ \langle\!\langle #1\rangle\!\rangle}
\newcommand{\conv}[1]{ #1\downarrow}
\newcommand{\Conv}[1]{ #1\Downarrow}
\newcommand{\GIO}{g_{1,0}}
\newcommand{\GOI}{g_{0,1}}
\newcommand{\SIO}{{\sf S_{1}}}
\newcommand{\SOI}{{\sf S_{0}}}
\newcommand{\AcBes}{{\sf{AcBes}}}
\newcommand{\SAcBes}{{\sf{SAcBes}}}
\newcommand{\BcAes}{{\sf{BcAes}}}
\newcommand{\SBcAes}{{\sf{SBcAes}}}
\newcommand{\sioa}{{s_{1,0,a}}}
\newcommand{\soia}{{s_{0,1,a}}}
\newcommand{\siob}{{s_{1,0,b}}}
\newcommand{\soib}{{s_{0,1,b}}}
\newcommand{\full}[1]{\curlyveedownarrow \hspace*{-8pt}\raisebox{6pt}{\scriptsize \it
    #1}}
\newcommand{\fullp}{\full{p}}
\newcommand{\nat}{\ensuremath{\mathbb{N}}}
\newcommand{\real}{\ensuremath{\mathbb{R}}}
\newcommand{\convp}{\ensuremath{\mathop{\vdash\! \raisebox{2pt}{\(\scriptstyle p\)}\! \dashv}}}
\newcommand{\sbis}{\sim_{s}}
\newcommand{\nodA}{*++[o][F]{\Al}}
\newcommand{\nodr}{*++[o][F]{\color{red}{r}}}
\newcommand{\nodB}{*++[o][F]{\Be}}
\newcommand{\fl}[1]{\ar@/^/[#1]^r \ar@/^/[d]^{d}}
\newcommand{\flr}[1]{\ar@[blue]@2@/^/[#1]^{\color{blue} r} \ar@/^/[d]^{d}}
\newcommand{\fld}[1]{\ar@/^/[#1]^r \ar@[blue]@2@/^/[d]^{\color{blue} d}}
\newenvironment{proof}[1]{\begin{quotation}\noindent\textsf{Proof:} #1}{\(\Box\)\end{quotation}}
\newtheorem{theorem}{Theorem}
\newtheorem{proposition}[theorem]{Proposition}
\newtheorem{example}[theorem]{Example}
\newtheorem{lemma}[theorem]{Lemma}
\title{Intelligent escalation and the principle of relativity}
\author{Pierre Lescanne}
\institute{University of Lyon, \'Ecole normale sup\'erieure de Lyon, CNRS (LIP), \\ 46 all\'ee
d'Italie, 69364 Lyon, France}
\author{Pierre Lescanne\\
University of Lyon, \'Ecole normale sup\'erieure de Lyon, CNRS (LIP), \\ 46 all\'ee
d'Italie, 69364 Lyon, France}
\begin{document}
\maketitle

\begin{abstract}
\ifSV\else  \medskip\hrule\fi

\medskip

Escalation is the fact that in a game (for instance in an auction), the agents play
forever.  The $0,1$-game is an extremely simple infinite game with intelligent agents
in which escalation arises.  At the light of research on cognitive psychology, it
shows the difference between intelligence (algorithmic mind) and rationality
(algorithmic and reflective mind) in decision processes.  It also shows that
depending on the point of view (inside or outside) the perception of the rationality
of the agent change, which we call the \emph{principle of relativity}.

\ifSV\end{abstract}
\keywords{economic game, infinite game, sequential game, crash, escalation,
  coinduction, auction.}
\else \medskip

\noindent \textbf{Keywords:} economic game, infinite game, sequential game, extensive
game, escalation, speculative bubble, coinduction, auction.
\end{abstract}
\hrule\fi

\bigskip 


\rightline{To \textsf{Bernard Maris}}
\rightline{\scriptsize 23 September 1946 -- 7 January 2015}
\bigskip

\rightline{\parbox{8cm}{\begin{it} That ``rational agents'' should \emph{not} engage
      in such [escalation] behavior seems obvious.
    \end{it}}}  %
  \medskip%
  \rightline{Wolfgang Leininger~\cite{leininger89:_escal_and_cooop_in_confl_situat}}

\medskip

\rightline{\parbox{8cm}{\begin{it} I can calculate the movement of the stars, \\
but not the madness of men.
\end{it}}}%
\rightline{Newton\footnote{Actually probably apocryphal Newton's view on the outcome of the
    \emph{South Sea Bubble}.} in (1720)}

\medskip

Sequential games are the natural framework for decision processes.  In this paper we
study a decision phenomenon called \emph{escalation}.  Infinite sequential games
presented here generalize naturally sequential games with perfect information and
have been introduced by Lescanne~\cite{DBLP:journals/corr/abs-0904-3528} and Lescanne
and Perrinel~\cite{DBLP:journals/acta/LescanneP12} and formalize what is proposed in
the literature~\cite{osborne94:_cours_game_theory_full_first_name}.  Sequential games
are games in which each player plays one after the other (or possibly after herself).
Here we prove, using coinduction on a simple example, namely \emph{the 0,1 game}, that
escalation is not irrational.  More precisely, in escalation, agents exhibit only the
low part of the rational mind, namely the \emph{algorithmic mind}, assimilated to
\emph{intelligence} or \emph{fluid intelligence} to be more specific (see
Stanovich~\cite{stanovich2010intelligence}).  Since intelligent  behavior contrasts
with the somewhat obvious observation that escalation is irrational, we state a
\emph{principle of relativity} which says that \emph{``the view of the insider (the
  agent) is not the same as the view of the outsider (the observer)''}, a well known
fact in computer science when studying
distributive system~\cite{jacobs12:_introd_coalg}.
In addition, the 0,1 game has nice
properties which make it an excellent paradigm of escalation, a good domain of
application of coalgebras and coinduction and a very nice opportunity to present
coinduction, a tool designed  to prove properties of complex systems.

\section{The problem of escalation}
\label{sec:escal}

Escalation in sequential games is a classic of game theory and it is admitted that
escalation is irrational.   Consider  agents able to reason formally that is making choices which are optimal
and robust, in other words choosing equilibria. 
In finite sequential games,
a right choice is obtained by a specific equilibrium called \emph{backward induction} (see Appendix).
More precisely a consequence of Aumann's theorem~\cite{aumann95} says that an agent
takes a good decision in a finite sequential game if she makes her choice
according to backward induction. In this paper we generalize backward induction into
\emph{subgame perfect equilibria} and we explore the kind of reasoning obtained  when
built on subgame perfect equilibria (\textsf{SPE} in short).  Such an appropriate reasoning
is said to be based on \emph{an algorithmic mind}  (see Section~\ref{sec:really}).

\paragraph{What is escalation?} 
In a sequential game, escalation is the possibility that agents take adequate
decisions forever without stopping.  This phenomenon has been evidenced by
Shubik~\cite{Shubik:1971} in a game called the \emph{dollar auction}. Without being
very difficult, the analysis of the dollar auction is relatively involved, because it
requires infinitely many infinite strategy profiles indexed by
$n`:\nat$~\cite{DBLP:journals/acta/LescanneP12}.  At each step there are two and only
two equilibria and therefore two potential intelligent decisions for the agents, namely ``stop''
or ``continue''.  If both agents choose always ``continue'', an escalation occurs.  In
this paper, we propose an example which is simpler theoretically and which
offers infinitely many infinite equilibria at each step unlike the dollar auction.  Due to the form of the equilibria,
the agent has no clue on which strategy is taken by her opponent.

\paragraph{What is coinduction?}
Wikipedia gives the following definition of coinduction: \emph{``In computer science,
  coinduction is a technique for defining and proving properties of systems of
  concurrent interacting objects.''}  Actually coinduction is more than a technique,
since it is a formal approach to correct reasoning on infinite structures and to
complex systems which applies far beyond computer science, especially to economics.
First traces of coinduction can be found at the beginning of the XX$^{th}$ century
when researchers tried to set a mathematical foundation to not well-founded
sets~\cite{mirimanoff17:_les_antim_de_russel_et}.  It was revisited by Peter
Aczel~\cite{aczel88:_non_well_found_sets} creating the foundation of coinduction.
Davide Sangiorgi~\cite{DBLP:journals/toplas/Sangiorgi09} gives a historical account
of the field.  Notice that Lawrence Moss and Ignacio Viglizzo have already
apply coalgebras to the theory of normal form game~\cite{DBLP:journals/entcs/MossV04}.

\paragraph{Escalation and infinite games.}

Books and
articles~\cite{colman99:_game_theor_and_its_applic,gintis00:_game_theor_evolv,osborne04a,leininger89:_escal_and_cooop_in_confl_situat,oneill86:_inten_escal_and_dollar_auction}
cover escalation.  Following Shubik, all agree that escalation takes place and can
only take place in an infinite game, but their argument uses reasoning on finite
games.  Indeed, if we cut at a finite position the infinite game of the dollar
auction in which escalation is supposed to take place, we get a finite game, in which
the only right decision is to never start the game, because the only backward
induction equilibrium corresponds to not start playing.  Then the result is
extrapolated by the authors to infinite games by making the size of the game to grow
indefinitely.  However, it has been known for a long time at least since
Weierstra\ss{}~\cite{weierstrass72}, that the ``cut and extrapolate'' method is wrong
(see Appendix), or said otherwise, there is no continuity at infinite.  For
Weierstra\ss{} this would lead to the conclusion that the infinite sum of
differentiable functions would be differentiable whereas he has exhibited a famous
counterexample.  In the case of infinite structures like infinite games, the right
reasoning is coinduction.  With coinduction we were able to show that in the dollar
auction escalation can be the result of a formal
reasoning~\cite{DBLP:journals/acta/LescanneP12,DBLP:journals/corr/abs-1112-1185}.
Currently, since the tools used generally in economics are pre-coinduction based,
they conclude that bubbles and crises are impossible and everybody's experience has
witnessed the opposite.  Careful analysis done by quantitative economists, like for
instance
Bouchaud~\cite{bouchaud08:_econom,bouchaud03:_theor_finan_risk_deriv_pricin}, have
shown that bursts (aka volatility), which share much similarities with escalation,
actually take place at any time scale.  Escalation is therefore an intrinsic feature
of economics.  Consequently, coinduction is probably the tool that economists who
call for a rethinking economics
\cite{colander05,bouchaud08:_econom,turner12:_econom_crisis} are waiting
for~\cite{winschel12:_privat}.

\paragraph{Structure of the paper}

This paper is structured as follows. In Section~\ref{sec:binary-sequ-games} we
present infinite games, infinite strategy profiles and infinite strategies, then we
describe the 0,1-game in Section~\ref{sec:0-1}. Last, we introduce the concept of
equilibrium (Sections~\ref{sec:subg-pertf-equil} and~\ref{sec:Nash}) and we discuss
escalation (Section~\ref{sec:escalation}). In Section~\ref{sec:really} we discuss the
actual rationality of the agents from a cognitive science point of view.  In an
appendix, we talk about finite $0,1$-games and finite strategy profiles. Bart
Jacobs~\cite{jacobs12:_introd_coalg}, Jan Rutten~\cite{DBLP:journals/tcs/Rutten00}
and Davide Sangiorgi~\cite{sangiorgi11} propose didactic introductions to coalgebras
and coinduction.  This paper is a deeply modified version
of~\cite{DBLP:conf/calco/Lescanne13}.

\section{Two choice sequential games}
\label{sec:binary-sequ-games}

Our aim is not to present a general theory of coalgebras or a theoretical
foundation of infinite extensive games. For this the reader is invited to look
at~\cite{Abramsky:arXiv1210.4537,DBLP:journals/corr/abs-1112-1185,DBLP:journals/acta/LescanneP12}.
But we want to give a taste of infinite sequential games\footnote{If the reader feels
  that this approach is not formal enough, she (he) can look at the ``ultra''-formal
approach found in the COQ scripts mentioned in Section~\ref{sec:Nash}.} through a very simple
one. This game has two agents and two choices.  To support our claim about the
intelligence of escalation, we do not need more features.

Assume that the set $\textsf{P}$ of agents is made of two agents called \textsf{A}
and \textsf{B}.  In this framework, an `infinite sequential two choice game' has two
shapes.  First, it can be an ending position in which case it boils down to the
distribution of the payoffs to the agents. In other words, an ending game is reduced
to a function $f: A"|->" f_{\Al}, B"|->" f_{\Be}$ and we write it $\game{f}$.  Second
, it can be a generic game with a set \textsf{Choice} made of two potential
choices: $d$ or $r$ ($d$ for \emph{down} and $r$ for \emph{right)}.  Since the game
is potentially infinite, it may continue forever.  Thus formally in this most general
configuration a game can be seen as a triple:
\begin{displaymath}
g = \game{p, g_d, g_r}.
\end{displaymath}
where $p$ is an agent and $g_d$ and $g_r$ are themselves games. The subgame $g_d$
is for the down choice, i.e., the choice corresponding to go \emph{down} and
the subgame $g_r$ is for the \emph{right} choice, i.e., the choice
corresponding to go to the right. Therefore, we define a functor
(see~\cite{DBLP:journals/tcs/Rutten00} page~4 and following):
\[\game{~}: \textsf{X} \quad "->" \quad \real^\Player \quad + \quad
\Player \times \textsf{X} \times \textsf{X}.\]
introducing a category of coalgebras
of which \textsf{Game} is the final coalgebra and where $\Player =
\{\mathsf{A}, \mathsf{B}\}$.   In other words, \textsf{Game} satisfies the isomorphism
\begin{displaymath}
  \textsf{Game} \quad \simeq \quad \real^\Player \quad + \quad
\Player \times \textsf{Game} \times \textsf{Game}.
\end{displaymath}
\begin{example}\label{ex:fingam}
Here is a picture of a typical finite sequential game:

\begin{displaymath}
  \xymatrix@C=25pt{
    \nodA\fl{rrr} &&& \nodB\fl{rrr} &&& \nodA\fl{rr} && 3,2\\
    \nodA\fl{rr}&&2,0&\nodA\fl{rr}&&1,2&\nodA\fl{rr} && 2,1&&\\
    \nodB\fl{rr}&&4,7&2,2&&&\nodB\fl{rr}&&3,6\\
    1,8&&&&&&1,1&&& }
\end{displaymath}
Let us call it $gg$. In this picture $3,2$ represents the game $\game{\Al"|->"3, \Be
  "|->"2}$ and $gg=\game{\Al,gg_1, gg_2}$ where
  \begin{eqnarray*}
gg_1 &=&
  \xymatrix@C=25pt{
    \nodA\fl{rr}&&2,0\\
    \nodB\fl{rr}&&4,7\\
    1,8}
\end{eqnarray*}
and 
\begin{eqnarray*}
  gg_2 &=&
  \xymatrix@C=25pt{
    \nodB\fl{rrr} &&& \nodA\fl{rr} && 3,2\\
    \nodA\fl{rr}&&1,2&\nodA\fl{rr} && 2,1&&\\
    2,2&&&\nodB\fl{rr}&&3,6\\
    &&&1,1&&& }
\end{eqnarray*}
If we write the full decomposition of $gg_1$ we get:
\begin{displaymath}
  gg_1 = \game{\All, \game{\Bel, \game{\Al"|->"1, \Bel
        "|->"8}, \game{\Al"|->"4, \Bel "|->"7}}, \game{\All"|->"2, \Bel "|->"0}}.
\end{displaymath}
\end{example}
\begin{example}
  \begin{displaymath}
      \xymatrix@C=10pt{
          &\ar@{.>}[r]& *++[o][F]{\Al} \ar@/^1pc/[rr]^r \ar@/^/[d]^d 
          &&*++[o][F]{\Be} \ar@/^1pc/[ll]^r \ar@/^/[d]^d \\
          &&0,1&&1,0
        }
  \end{displaymath}
is a picture of an infinite game which will be studied more formally in
Section~\ref{sec:0-1}.  Notice the dotted arrow 
$\xymatrix@C=10pt{\ar@{.>}[r]&}$ which
shows the start of the game.
\end{example}
\begin{example}\label{ex:flower}
  The game which is solution of the equation
  \begin{eqnarray*}
    g &=& \game{\All, \game{\Bel, g, g},\game{\Bel, g, g}}
  \end{eqnarray*}
  is a game infinite in both direction, \emph{down} and \emph{right}, alternating
  agents \Al{} and \Be{}.  Notice that it has no leaf and that payoffs are not
  attributed.  It can be pictured as:
  \begin{displaymath}
\xymatrix@C=8pt{\ar@{.>}[r]&*++[o][F]{\Al}\ar[rr]^r \ar[d]^d&& *++[o][F]{\Be} \ar
  @(ur,ur)[ll]_r \ar@(d,dr)[ll]_d&\\
  & *++[o][F]{\Be} \ar@(dl,dl)[u]^d \ar@(r,dr)[u]^r
}
\end{displaymath}

\end{example}
\begin{example}
\newcommand{\tcent}{}
\begin{footnotesize}
  \begin{displaymath}
    \xymatrix@C=10pt@R=15pt{
      *++[o][F]{\Al} \ar@/^/[r]^r \ar@/^/[d]^d &*++[o][F]{\Be} \ar@/^/[r]^r \ar@/^/[d]^d
      &*++[o][F]{\Al} \ar@/^/[r]^r \ar@/^/[d]^d &*++[o][F]{\Be} \ar@/^/[r]^r \ar@/^/[d]^d 
      &*++[o][F]{\Al} \ar@/^/[r]^r \ar@/^/[d]^d &*++[o][F]{\Be} \ar@/^/[d]^{d}
      &*++[o][F]{\Al} \ar@/^/[r]^r\ar@/^/[d]^d_{\ldots\ \ \ }&*++[o][F]{\Be}\ar@/^/[d]^{d~~~\ldots}\\
      \scriptscriptstyle{0,100}&\scriptscriptstyle{95\tcent,0}&\scriptscriptstyle{-5\tcent,95\tcent}&\scriptscriptstyle{90\tcent,-5\tcent}&\scriptscriptstyle{-10\tcent,90\tcent}&\scriptscriptstyle{85\tcent,-10\tcent}&\scriptscriptstyle{-5n\tcent,100-5n\tcent}
      &\scriptscriptstyle{100-5(n+1)\tcent,-5n\tcent}
    }
  \end{displaymath}
\end{footnotesize}
is the \emph{dollar auction} game~\cite{Shubik:1971} with bids of $5\cent$.
\end{example}

\subsection{Strategy profiles}
\label{sec:prof}

From a game, we can deduce \emph{strategy profiles} (later we will also say
sometimes simply \emph{profiles}) obtained by adding a label, at each node, which is
a choice made by the agent.  In a two choice sequential game, choices belong to the
set $\{d, r\}$.  Therefore a strategy profile which does
not correspond to an ending game is a quadruple:
\[s = \prof{p,c,s_d,s_r},\] where $p$ is an agent ($\All$ or $\Bel$), $c$ is a choice
($d$ or $r$), and, $s_d$ and $s_r$ are two strategy profiles.  The strategy profile
which corresponds to an ending position has no choice, namely it is reduced to a
function $\prof{f} = \prof{{A"|->" f_{\Al}}, {B"|->" f_{\Be}}}$.   The functor 
\begin{displaymath}
  \prof{~} : \mathsf{X} \quad "->"  \quad \real^\Player \quad + \quad 
\Player \times \textsf{Choice} \times \textsf{X} \times \textsf{X}.
\end{displaymath}
where
\begin{eqnarray*}
  \Player &=& \{\All, \Bel\}\\
  \textsf{Choice} &=& \{r, d\}
\end{eqnarray*}
introduces a category of coalgebras in which the coalgebra \textsf{StratProf} of
infinite strategy profiles is the final coalgebra.  Hence  \textsf{StratProf}
satisfies the isomorphism:
\begin{displaymath}
   \textsf{StratProf} \quad \simeq  \quad \real^\Player \quad + \quad 
\Player \times \textsf{Choice} \times \textsf{StratProf} \times \textsf{StratProf}.
\end{displaymath}
\begin{example} \label{ex:stratprof}
Here are the pictures of three strategy profiles associated with the game
  of Figure~\ref{ex:fingam}.
  \begin{displaymath}
s_1 \quad =\quad
    \xymatrix@C=25pt{
\nodA\flr{rrr} &&& \nodB\flr{rrr} &&& \nodA\flr{rr} && 3,2\\
\nodA\flr{rr}&&2,0&\nodA\fld{rr}&&1,2&\nodA\fld{rr} && 2,1&&\\
\nodB\fld{rr}&&4,7&2,2&&&\nodB\flr{rr}&&3,6\\
1,8&&&&&&1,1&&&
}
\end{displaymath}

\begin{displaymath}
s_2 \quad = \quad
\xymatrix@C=25pt{
\nodA\flr{rrr} &&& \nodB\flr{rrr} &&& \nodA\fld{rr} && 3,2\\
\nodA\flr{rr}&&2,0&\nodA\fld{rr}&&1,2&\nodA\fld{rr} && 2,1&&\\
\nodB\fld{rr}&&4,7&2,2&&&\nodB\flr{rr}&&3,6\\
1,8&&&&&&1,1&&&
}
\end{displaymath}

\begin{displaymath}
s_3  \quad = \quad
\xymatrix@C=25pt{
\nodA\fld{rrr} &&& \nodB\fld{rrr} &&& \nodA\fld{rr} && 3,2\\
\nodA\flr{rr}&&2,0&\nodA\flr{rr}&&1,2&\nodA\fld{rr} && 2,1&&\\
\nodB\flr{rr}&&4,7&2,2&&&\nodB\flr{rr}&&3,6\\
1,8&&&&&&1,1&&&
}
\end{displaymath}

  The start of those strategy profiles is taken by player $\All$ and
  she chooses \emph{right} in the two first strategy profiles and \emph{down} in the
  third strategy profile.

$s_1$ is built with the strategy profiles:
\begin{displaymath}
  s_{11} =     \xymatrix@C=25pt{
\nodA\flr{rr}&&2,0\\
\nodB\fld{rr}&&4,7\\
1,8&&
}
\end{displaymath}
and
\begin{displaymath}
  s_{12} =     \xymatrix@C=25pt{
\nodB\flr{rrr} &&& \nodA\flr{rr} && 3,2\\
\nodA\fld{rr}&&1,2&\nodA\fld{rr} && 2,1&&\\
2,2&&&\nodB\flr{rr}&&3,6\\
&&&1,1&&&
}
\end{displaymath}
with
\begin{displaymath}
  s_1 = \prof{\All, r, s_{11}, s_{12}}.
\end{displaymath}
The full decomposition of $s_{11}$ is
\begin{displaymath}
  s_{11}= \prof{\All, r, \prof{\Bel, d, \prof{\Al"|->"1, \Bel
        "|->"8}, \prof{\Al"|->"4, \Bel "|->"7}}, \prof{\All"|->"2, \Bel "|->"0}}.
\end{displaymath}

\end{example}
\begin{example}
  Figure~\ref{fig:boucle} on page~\pageref{fig:boucle}, Figure~\ref{fig:alwr} on
  page~\pageref{fig:alwr}, and Figure~\ref{fig:ABAB} on page~\pageref{fig:ABAB} give
  strategy profiles of infinite sequential games.
\end{example}
\begin{example}
\newcommand{\tcent}{}
\begin{footnotesize}
  \begin{displaymath}
    \xymatrix@C=10pt@R=15pt{
      *++[o][F]{\Al} \ar@[blue]@{=>}@/^/[r]^r \ar@/^/[d]^d &*++[o][F]{\Be} \ar@/^/[r]^r \ar@[blue]@{=>}@/^/[d]^d
      &*++[o][F]{\Al} \ar@[blue]@{=>}@/^/[r]^r \ar@/^/[d]^d &*++[o][F]{\Be} \ar@/^/[r]^r \ar@[blue]@{=>}@/^/[d]^d 
      &*++[o][F]{\Al} \ar@[blue]@{=>}@/^/[r]^r \ar@/^/[d]^d &*++[o][F]{\Be} \ar@[blue]@{=>}@/^/[d]^{d}
      &*++[o][F]{\Al} \ar@[blue]@{=>}@/^/[r]^r\ar@/^/[d]^d_{\ldots\ \ \ }&*++[o][F]{\Be}\ar@[blue]@{=>}@/^/[d]^{d~~~\ldots}\\
      \scriptscriptstyle{0,100}&\scriptscriptstyle{95\tcent,0}&\scriptscriptstyle{-5\tcent,95\tcent}&\scriptscriptstyle{90\tcent,-5\tcent}
      &\scriptscriptstyle{-10\tcent,90\tcent}&\scriptscriptstyle{85\tcent,-10\tcent}
      &\scriptscriptstyle{-5n\tcent,100-5n\tcent}
      &\scriptscriptstyle{100-5(n+1)\tcent,-5n\tcent}
    }
  \end{displaymath}
\end{footnotesize}
is a strategy profile which is of interest in the dollar auction
game~\cite{DBLP:journals/acta/LescanneP12} for proving that agent reasoning formally
can enter escalation.
\end{example}
\begin{example}
  \begin{displaymath}
\xymatrix@C=8pt{\ar@{.>}[r]&*++[o][F]{\Al}\ar[rr]^r \ar@[blue]@{=>}[d]^d&& *++[o][F]{\Be} \ar@[blue]@{=>}
  @(ur,ur)[ll]_r \ar@(d,dr)[ll]_d&\\
  & *++[o][F]{\Be} \ar@(dl,dl)[u]^d \ar@[blue]@{=>}@(r,dr)[u]^r
}
\end{displaymath}
is a strategy profile of the game of Example~\ref{ex:flower} and  is solution of
the  equation
  \begin{eqnarray*}
    s &=& \prof{\All, d,\prof{\Bel, r, s, s},\prof{\Bel, r, s, s}}
  \end{eqnarray*}
\end{example}
From a strategy
profile, we
can build a game by removing the choices:
\begin{eqnarray*}
\mathsf{game} &::& \mathsf{StratProf} "->" \mathsf{Game}\\
 \mathsf{game}(\prof{f}) &=& \game{f}\\
\mathsf{game}(\prof{p,c,s_d,s_r}) &=& \game{p,\mathsf{game}(s_d), \mathsf{game}(s_r)}
\end{eqnarray*}
$\mathsf{game}(s)$ is the game of the strategy profile $s$.  Notice that the function
\textsf{game} is not recursive like say the function \textsf{fib}
\begin{eqnarray*}
  \mathsf{fib}(0) &=& 0 \\
\mathsf{fib}(1) &=& 1 \\
\mathsf{fib}(n+2) &=& \mathsf{fib}(n+1) + \mathsf{fib}(n)
\end{eqnarray*}
which defines the Fibonacci sequence. In fact, \textsf{game} is corecursive since it
works on potentially infinite structures (see the end of this section) whereas
\textsf{fib} which works on natural numbers works on finite structure and is
recursive.  Notice however that if we consider only finite games and finite strategy
profiles, then a recursive \textsf{game} function could be defined.  In this paper we
 consider the final coalgebras \textsf{Game} which contains finite and infinite
sequential games and the final coalgebra \textsf{Stratprof} which contains finite and
infinite strategy profiles.

Given a strategy
profile $s$, we can associate, by induction, a (partial) payoff function $\widehat{s}$,
as follows:
\begin{displaymath}
\begin{array}{lll}
\textit{~when~} &s = \prof{f}           &\quad \widehat{s} \quad =\quad f\\
\textit{~when~} & s = \prof{p,d,s_d,s_r} &\quad \widehat{s} \quad =  \quad \widehat{s_d} \qquad \\
\textit{~when~} & s = \prof{p,r,s_d,s_r} & \quad\widehat{s} \quad =  \quad\widehat{s_r} \qquad 
\end{array}
\end{displaymath}
In the literature on extensive games
(\cite{aumann95,halpern01:_subst_ration_backw_induc} for instance), authors have a
graph vision and they write $h^v_p(s)$ the payoff obtained by $p$ starting at vertex
$v$.  Here we consider only the start vertex (let us call it \emph{start}). For us,
the vertices are not primitive objects. The other vertices are start vertices of
subgames and of subprofiles and are only considered when dealing with those subgames
and those subprofiles.  What we write $\widehat{s}(p)$ would be written
$h^{\mathit{start}}_p(s)$ in~\cite{aumann95,halpern01:_subst_ration_backw_induc}.
\begin{example}
  In Example~\ref{ex:stratprof}, we have:
  \begin{displaymath}
    \begin{array}[h]{lcl}
    \widehat{s_1}(A)&=& 3\\
    \widehat{s_1}(B)&=& 2
  \end{array}
\qquad\qquad\qquad\qquad
  \begin{array}[h]{lcl}
    \widehat{s_2}(A)&=& 3\\
    \widehat{s_2}(B)&=& 6
  \end{array}
\qquad\qquad\qquad\qquad
  \begin{array}[h]{lcl}
    \widehat{s_3}(A)&=& 2\\
    \widehat{s_3}(B)&=& 0
  \end{array}
  \end{displaymath}

\end{example}
$\widehat{s}$ is not defined if its definition runs in an infinite process.  For
instance, in Example~\ref{ex:alwr}, $\widehat{s_{\Box r}}$ is not defined and in
Section~\ref{sec:escalation}, $\widehat{s_{\Al, \infty}}$ is not defined.  To ensure
that we consider only strategy profiles where the payoff function is defined we
can impose strategy profiles to be \emph{convergent}\footnote{Called
  \emph{leads to a leaf} in~\cite{DBLP:journals/acta/LescanneP12}.}, written
$\conv{s}$ (or prefixed $\downarrow(s)$) and defined as the least predicate
satisfying
\begin{displaymath}
  \conv{s} \hspace{15pt} "<=(i)>" \hspace{15pt}
  s = \prof{f} \quad \vee \quad 
  (s = \prof{p,d, s_d, s_r} \wedge \conv{s_d}) \ \vee \ 
  (s = \prof{p,r, s_d, s_r} \wedge \conv{s_r}).
\end{displaymath}
\begin{proposition}
  If  $\conv{s}$, then $\widehat{s}$ is defined.
\end{proposition}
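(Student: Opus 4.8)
The plan is to argue by induction on the inductive structure of the convergence predicate. Since $\conv{s}$ is stipulated to be the \emph{least} predicate satisfying its defining equivalence, it comes equipped with the associated induction principle: to establish a property $P(s)$ for every $s$ with $\conv{s}$, it suffices to show that $P$ propagates through each of the three clauses that generate $\conv{s}$. I take $P(s)$ to be the statement that $\widehat{s}$ is defined.

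First I would treat the base clause. If $\conv{s}$ holds because $s = \prof{f}$ is an ending profile, then by the first line of the definition of $\widehat{s}$ we have $\widehat{s} = f$, and $f$ is a genuine payoff function; hence $\widehat{s}$ is defined and $P(s)$ holds with no appeal to the induction hypothesis.

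Next I would treat the two recursive clauses, which are symmetric. Suppose $\conv{s}$ holds because $s = \prof{p,d,s_d,s_r}$ with $\conv{s_d}$; the induction hypothesis gives that $\widehat{s_d}$ is defined. The defining equation for $\widehat{s}$ in the \emph{down} case sets $\widehat{s} = \widehat{s_d}$, so $\widehat{s}$ is defined as well. The \emph{right} case, $s = \prof{p,r,s_d,s_r}$ with $\conv{s_r}$, is identical with $r$ and $s_r$ in place of $d$ and $s_d$. These three cases exhaust the ways in which $\conv{s}$ can hold, which completes the induction.

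The only point that deserves care --- and the reason the induction is set up on the derivation of $\conv{s}$ rather than on the a priori possibly infinite profile $s$ --- is the matching between the two definitions. The computation of $\widehat{s}$ descends into exactly one subprofile at each step (into $s_d$ for a $d$-label, into $s_r$ for an $r$-label), and that is precisely the subprofile whose convergence is witnessed by the corresponding clause of $\conv{s}$. Thus the infinite process that would leave $\widehat{s}$ undefined is ruled out: the inductive character of $\conv{s}$ guarantees that following the chosen branches reaches a leaf $\prof{f}$ after finitely many steps, at which point $\widehat{s}$ bottoms out on an actual payoff function. I expect no real obstacle beyond keeping this correspondence explicit; the statement is essentially the observation that the recursion defining $\widehat{s}$ terminates exactly along the well-founded paths certified by $\conv{s}$.
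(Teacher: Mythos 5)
Your proof is correct and follows essentially the same route as the paper's: induction on the inductive definition of $\conv{s}$, with the base case $\prof{f}$ and the two symmetric cases $c=d$ and $c=r$ handled via the induction hypothesis. Your explicit remark that the induction is on the derivation of $\conv{s}$ rather than on the (possibly infinite) profile $s$ makes precise what the paper leaves implicit, but it is the same argument.
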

\begin{proof}
  By induction.  If $s=\prof{f}$, then since $\widehat{s} =f$ and $f$ is defined,
  $\widehat{s}$ is defined.

  Assume $\conv{s}$.  If $s = \prof{p,c, s_d, s_r}$, there are two cases: $c=d$ or
  $c=r$. Let us look at $c=d$.  If $c=d$, then $\conv{s_d}$ and $\widehat{s_d}$ is
  defined by induction and since $\widehat{s}=\widehat{s_d}$, we conclude that
  $\widehat{s}$ is defined.

  The case $c=r$ is similar.
\end{proof}
As we will consider the payoff function also for subprofiles, we want the payoff
function to be defined on subprofiles as well.  Therefore we define a property
stronger than convergence which we call \emph{strong convergence}\footnote{Called
  \emph{always lead to a leaf} in~\cite{DBLP:journals/acta/LescanneP12}.}.  We say
that a strategy profile $s$ is \emph{strongly convergent} and we write it $\Conv{s}$
if it is the largest predicate fulfilling the following conditions.
\begin{itemize}
\item $\Conv{\prof{p,c, s_d, s_r}}$ if
\begin{itemize}
  \item $\prof{p,c, s_d, s_r}$ is convergent,
  \item $s_d$ is strongly convergent,
  \item $s_r$ is strongly convergent.
  \end{itemize}
\item $\Conv{\prof{f}}$ that is that for whatever $f$, $\prof{f}$ is strongly convergent
\end{itemize}
More formally:
\[\Conv{s} \quad "<=(c)>" \quad s = \prof{f} \ \vee \ (s = \prof{p,c, s_d, s_r} \wedge
\conv{s} \wedge \Conv{s_d} \wedge \Conv{s_r}).\]

There is however a difference between the definitions of $\downarrow$ and
$\Downarrow$. Wherever $\conv{s}$ is defined \emph{by induction}\footnote{Roughly
  speaking a definition by induction %
  works from the basic elements, to the constructed elements.   For the natural
  numbers, for $0$ and from $n$ to $n+1$. For finite strategy profiles, for $\prof{f}$ and
  from $s_1$ and $s_2$ to $\prof{p,c,s_1,s_2}$.}, %
from ending games to the game $s$, $\Conv{s}$ is defined \emph{by
  coinduction}\footnote{Roughly speaking a definition by coinduction works on
  infinite objects, like an invariant.}.  This difference between recursive and
corecursive definition is the core of coalgebra
theory~\cite{jacobs12:_introd_coalg,DBLP:journals/tcs/Rutten00,sangiorgi11}.  However, this is not the aim of the present paper to
present coinduction in detail. 
Both induction and coinduction are based on the fixed-point theorem.
The definition of $\Downarrow$ is typical of infinite profiles and means that $\Downarrow$ is invariant along the
infinite profile.  To make the difference clear and explicit between the definitions, we use the
symbol \("<=(i)>"\) for inductive definitions and the symbol \("<=(c)>"\) for
coinductive definitions.  Recall that the definition of the function
\mbox{\textsf{game} :: \textsf{StratProf} $"->"$ \textsf{Game}}
was presented as a coinductive function.  We get easily the following proposition.
\begin{proposition}
  $\Conv{s} \quad "=>" \quad \conv{s}.$
\end{proposition}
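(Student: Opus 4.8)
The plan is to notice that this implication is essentially a one-step unfolding of the coinductive definition of $\Downarrow$, rather than a genuine coinductive argument. The key observation is that convergence $\conv{s}$ appears \emph{verbatim} as one of the conjuncts in the clause defining strong convergence, so once we are allowed to inspect the guard of $\Conv{s}$ the conclusion falls out immediately.

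Concretely, I would assume $\Conv{s}$ and unfold its defining clause exactly once, producing two cases. In the first case $s = \prof{f}$ for some payoff function $f$; then the first disjunct of the inductive definition of $\downarrow$ applies at once and gives $\conv{s}$. In the second case $s = \prof{p,c,s_d,s_r}$, and the body of the definition of $\Conv{s}$ asserts precisely $\conv{s} \wedge \Conv{s_d} \wedge \Conv{s_r}$; thus $\conv{s}$ is available directly as one of these conjuncts. In both cases $\conv{s}$ holds, which is exactly the claim $\Conv{s} \Rightarrow \conv{s}$. Note that I do not need to recurse into $s_d$ or $s_r$, nor to set up a coinductive invariant, since the witness for $\conv{s}$ is delivered locally at the root.

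The step I would normally expect to be the obstacle --- extracting a finite, well-founded convergence derivation from an infinite strong-convergence invariant --- turns out to carry no weight here, precisely because $\Downarrow$ was designed to embed the one-step predicate $\conv{s}$ at every node. It is worth flagging where the difficulty would reappear: had strong convergence been defined only through the strong convergence of the chosen subprofiles, without the explicit $\conv{s}$ conjunct, one would then have to argue that following the selected branch $c$ actually reaches a leaf, using that each node encountered is itself strongly convergent and hence locally convergent. With the definition as stated, that reasoning has already been absorbed into the guard, so the proof reduces to the single unfolding described above.
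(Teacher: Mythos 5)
Your proof is correct and matches the paper's treatment: the paper states this proposition without any proof at all, introducing it with ``we get easily the following proposition,'' and the easy argument it alludes to is exactly your one-step unfolding --- in the leaf case $\conv{s}$ holds by the base clause of the inductive definition of $\downarrow$, and in the non-leaf case $\conv{s}$ appears verbatim as a conjunct in the defining clause of $\Conv{s}$, so no coinductive invariant or recursion into subprofiles is needed.
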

Clearly we do not have the opposite implication, as shown by
Example~\ref{ex:ddBoxr}.  Indeed $\conv{s}$ is a local property whereas $\Conv{s}$ is a
somewhat global property. 

We can define  the notion of \emph{subprofile},
written $\precsim$:
\[s' \precsim s \quad "<=(i)>" \quad s' \sbis s \ \vee\ s = \prof{p, c, s_d, s_r} \wedge
(s' \precsim s_d \vee s' \precsim s_r),\] %
where $\sbis$ is the bisimilarity\footnote{The reader can consider $\sbis$ as the
  equality on \textsf{StratProf}.} among profiles defined as the largest binary
predicate $s' \sbis s$ such that
\iffullpage
\[
s'\sbis s\quad"<=(c)>"\quad
s' =\prof{f}=s \quad \vee \quad (s' = \prof{p, c, s_d', s_r'}
\wedge  s = \prof{p, c, s_d, s_r} \wedge s_d'\sbis s_d\wedge s_r' \sbis s_r).
\]
\else
\begin{eqnarray*}
  s'\sbis s\quad"<=(c)>"\quad
s' =\prof{f}=s &\vee& (s' = \prof{p, c, s_d', s_r'} \wedge  \\
&&s = \prof{p, c, s_d, s_r} \wedge \\
&& s_d'\sbis s_d\wedge s_r' \sbis s_r).
\end{eqnarray*}
\fi Notice that since we work with infinite objects, we may have $s\not\sbis s'$ and
${s\precsim s' \precsim s}$.  In other words, an infinite profile can be a strict
subprofile of itself.  This is the case for $\sioa$ and $\siob$ in
Section~\ref{sec:subg-pertf-equil}.  If a profile is strongly convergent, then its
subprofiles are strongly convergent as well and the payoffs associated with all its
subprofiles are defined.
\begin{proposition}~
  \begin{enumerate}
  \item If $\Conv{s_1}$ and $s_2 \precsim s_1$ then $\Conv{s_2}$.
  \item If $\Conv{s_1}$ and if $s_2 \precsim s_1$, then $\widehat{s_2}$ is defined.
  \end{enumerate}

\end{proposition}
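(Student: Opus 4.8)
The two statements are tightly linked: the second is an immediate consequence of the first together with the earlier Proposition asserting that $\Conv{s} "=>" \conv{s}$ and the Proposition guaranteeing that $\conv{s}$ implies $\widehat{s}$ is defined. So the plan is to establish part~(1) first, and then read off part~(2) as a corollary.

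\emph{Part~(1).} The key observation is that strong convergence $\Downarrow$ is a \emph{coinductively} defined (largest) predicate, so to prove $\Conv{s_2}$ I cannot simply induct downwards; instead I would exhibit a set that is closed under the defining clause of $\Downarrow$ and contains $s_2$. Concretely, I would consider the relation $\precsim$ and proceed by induction on the derivation of $s_2 \precsim s_1$, since $\precsim$ \emph{is} defined inductively. The base case is $s_2 \sbis s_1$: here $s_2$ is bisimilar to the strongly convergent $s_1$, and since $\Downarrow$ is a property invariant under the coinductive structure, bisimilar profiles satisfy $\Downarrow$ identically, so $\Conv{s_2}$ follows. The inductive case is $s_1 = \prof{p,c,s_d,s_r}$ with $s_2 \precsim s_d$ (or $s_2 \precsim s_r$); from $\Conv{s_1}$ the defining clause of $\Downarrow$ gives $\Conv{s_d}$ and $\Conv{s_r}$, so by the induction hypothesis applied to $s_2 \precsim s_d$ we conclude $\Conv{s_2}$, and symmetrically for $s_r$.

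\emph{Part~(2).} Given $\Conv{s_1}$ and $s_2 \precsim s_1$, part~(1) yields $\Conv{s_2}$. By the earlier Proposition, $\Conv{s_2} "=>" \conv{s_2}$, and by the Proposition that precedes it, $\conv{s_2}$ guarantees that the payoff function $\widehat{s_2}$ is defined. Chaining these three facts finishes the argument.

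\emph{Expected obstacle.} The delicate point is the base case $s_2 \sbis s_1$ of part~(1): I must be careful that $\Downarrow$ genuinely respects bisimilarity, i.e.\ that $s_2 \sbis s_1 \wedge \Conv{s_1} "=>" \Conv{s_2}$. Since the footnote invites us to treat $\sbis$ as equality on \textsf{StratProf}, this is essentially immediate, but strictly it requires checking that the coinductive predicate $\Downarrow$ is a bisimulation-invariant — one shows the set $\{\,s_2 : \exists s_1.\ s_2 \sbis s_1 \wedge \Conv{s_1}\,\}$ is closed under the defining clause of $\Downarrow$, using that $\sbis$ matches the agent, the choice, and recursively the two subprofiles. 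The other genuinely coinductive subtlety is that $s_2$ may be a strict subprofile of itself (as flagged in the text for $\sioa$ and $\siob$); the induction on the \emph{derivation} of $\precsim$ rather than on the structure of the profile is exactly what sidesteps this non-well-foundedness, so I would stress that the induction is on the finite derivation tree of $s_2 \precsim s_1$, which is always well-founded even when the profiles themselves are infinite.
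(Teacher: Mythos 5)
Your proof is correct, but there is nothing in the paper to compare it with: the paper states this proposition without any proof, presenting it as an immediate consequence of the definitions (right after the remark that subprofiles of strongly convergent profiles are strongly convergent and have defined payoffs). Your argument is a sound way to fill that gap, and it uses exactly the right devices: since $\precsim$ is defined inductively, induction on the finite derivation of $s_2 \precsim s_1$ is legitimate even though the profiles themselves may be infinite and non-well-founded --- this is precisely what handles the phenomenon, flagged in the paper, of a profile being a strict subprofile of itself; the inductive step just unfolds $\Conv{s_1}$ to obtain $\Conv{s_d}$ and $\Conv{s_r}$; and part~(2) follows by chaining part~(1) with the two earlier propositions ($\Conv{s} \Rightarrow \conv{s}$, and $\conv{s}$ implies $\widehat{s}$ defined). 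One small point you should make explicit in the base case: closure of your set $\{\,s_2 : \exists s_1,\ s_2 \sbis s_1 \wedge \Conv{s_1}\,\}$ under the defining clause of $\Downarrow$ requires not only matching the head player and choice and recursing into the two subprofiles, but also discharging the conjunct $\conv{s_2}$; for that you need the companion lemma that plain convergence $\downarrow$ is invariant under $\sbis$, proved by an easy induction on the derivation of $\conv{s_1}$. (If instead one reads $\sbis$ as equality on \textsf{StratProf}, as the paper's footnote invites, the base case is trivial and your derivation induction alone carries the whole proof.)
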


\subsection{The always modality}

We notice that $\downarrow$ characterizes a profile by a property of the start vertex,
we would say that this property is local. \(\Downarrow\) is obtained by distributing
the property along the game.  In other words we transform the predicate
\(\downarrow\) and such a predicate transformer is called a \emph{modality}.  Here we
are interested by the modality \emph{always}, also written \(\Box\).

Given a predicate $`F$ on strategy profiles, the predicate $\Box\,`F$ is defined
coinductively as follows:

\[\Box\,`F(s) \quad "<=(c)>" \quad `F(s) \wedge s = \prof{p, c, s_d, s_r} "=>" (\Box\,`F(s_d) \wedge
\Box `F(s_r)).\]
The predicate ``is strongly convergent'' is the same as the predicate ``is always
convergent''. 
\begin{proposition}
  \(\Conv{s} \qquad "<=>" \qquad \Box\downarrow(s).\)
\end{proposition}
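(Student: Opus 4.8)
The plan is to read both sides as greatest fixed points of monotone predicate transformers and to show that these transformers are literally the same map, so that their greatest fixed points --- which are $\Conv{}$ and $\Box\downarrow$ respectively --- must coincide. Write $\Phi$ for the functional behind strong convergence,
\[ \Phi(P)(s) \;\equiv\; \big(s = \prof{f}\big) \,\vee\, \big(s = \prof{p,c,s_d,s_r} \wedge \conv{s} \wedge P(s_d) \wedge P(s_r)\big), \]
so that $\Conv{}$ is the largest $P$ with $P = \Phi(P)$; and write $\Psi$ for the functional behind $\Box\downarrow$, obtained by instantiating the defining clause of the $\Box$ modality at the predicate ${\downarrow}$,
\[ \Psi(P)(s) \;\equiv\; \conv{s} \wedge \big(s = \prof{p,c,s_d,s_r} \Rightarrow (P(s_d) \wedge P(s_r))\big), \]
so that $\Box\downarrow$ is the largest $P$ with $P = \Psi(P)$. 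Both $\Phi$ and $\Psi$ are monotone in $P$ (each occurrence of $P$ is positive), so by Knaster--Tarski each greatest fixed point exists; write them $\nu\Phi$ and $\nu\Psi$, and note $\Conv{} = \nu\Phi$ and $\Box\downarrow = \nu\Psi$ by definition.

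The key step is then to verify $\Phi = \Psi$ by a case split on the shape of $s$, using that the two coalgebra constructors $\prof{f}$ and $\prof{p,c,s_d,s_r}$ are disjoint and exhaustive. If $s = \prof{f}$, then $\Phi(P)(s)$ is true because its first disjunct holds, while $\Psi(P)(s)$ reduces to $\conv{\prof{f}}$ (the implication being vacuously true), and $\conv{\prof{f}}$ holds by the base clause of $\downarrow$; so both are true. If $s = \prof{p,c,s_d,s_r}$, the first disjunct of $\Phi$ is false, so $\Phi(P)(s) = \conv{s} \wedge P(s_d) \wedge P(s_r)$, whereas in $\Psi$ the antecedent of the implication holds, so $\Psi(P)(s) = \conv{s} \wedge P(s_d) \wedge P(s_r)$ as well. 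Hence $\Phi(P) = \Psi(P)$ for every $P$, the two transformers coincide, and therefore $\Conv{} = \nu\Phi = \nu\Psi = \Box\downarrow$, which is exactly the claimed equivalence.

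If one prefers to stay closer to the coinduction principle, the same fact can be obtained as two containments. For $\Conv{s} \Rightarrow \Box\downarrow(s)$, one shows that $\Conv{}$ is a post-fixed point of $\Psi$: unfolding $\Conv{}$ once gives $\conv{s}$ together with $\Conv{s_d}$ and $\Conv{s_r}$ at a node, which is precisely the $\Psi$-invariance needed to conclude $\Conv{} \subseteq \nu\Psi$. The converse $\Box\downarrow(s) \Rightarrow \Conv{s}$ is symmetric: unfolding $\Box\downarrow$ once yields $\conv{s}$ and $\Box\downarrow$ on both successors, which is $\Phi$-invariance, so $\Box\downarrow \subseteq \nu\Phi$.

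I do not expect a real obstacle here; the only point demanding care is the leaf case, where ``being a leaf'' is recorded differently by the two definitions --- as an explicit disjunct in $\Phi$, but in $\Psi$ only implicitly, through $\conv{\prof{f}}$ being automatically true and the successor implication being vacuous. Checking that these two bookkeeping conventions agree, and that the constructor case split is genuinely exclusive, is the whole content of the argument; monotonicity and the appeal to the greatest-fixed-point characterisation are routine.
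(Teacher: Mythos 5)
Your proof is correct. The paper itself offers no proof of this proposition at all --- it is stated immediately after the remark that ``is strongly convergent'' is the same as ``is always convergent,'' and treated as evident from the definitions. Your argument supplies exactly the missing justification and in the most natural way: both $\Conv{}$ and $\Box\downarrow$ are greatest fixed points of monotone predicate transformers on \textsf{StratProf}, and a case split over the two constructors (disjoint and exhaustive by the finality isomorphism) shows the two transformers are literally the same map, the only point of friction being that the leaf case is an explicit disjunct in the definition of $\Downarrow$ but is absorbed into $\conv{\prof{f}}$ plus a vacuous implication in the definition of $\Box\downarrow$. Your fallback via two post-fixed-point containments is the same content phrased as a double application of the coinduction principle; either version stands on its own.
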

\subsection{Strategies}
\label{sec:strat}

The coalgebra \textsf{Strat} of \emph{strategies}\footnote{A strategy is not the same as a strategy
  profile, which is obtained as the sum of strategies.} is defined by the
functor \[\strat{~}: \mathsf{X} "->" \real^{\textsf{P}} \ + \ (\Player + \textsf{Choice})
\times \mathsf{X} \times \mathsf{X}\] 
where $\Player = \{\All, \Bel\}$ and
$\textsf{Choice} =\{d,r\}$.  In other words, the coalgebra \textsf{Strat} of strategies is solution
of the equation:
\begin{displaymath}
   \textsf{Strat} \quad \simeq \quad \real^{\textsf{P}} \ + \ (\Player + \textsf{Choice})
\times \textsf{Strat} \times \textsf{Strat}.
\end{displaymath}
A strategy of agent~$p$ is a game in which some occurrences of
$p$ are replaced by choices.  A strategy is written $\strat{f}$ or $\strat{x,s_1,
  s_2}$.  
\begin{example}
  Consider the strategy for $\All$ in the game $gg$ in which $\All$ decides to always
  take the choice $r$.
\begin{displaymath}
  \xymatrix@C=25pt{
    \nodr\fl{rrr} &&& \nodB\fl{rrr} &&& \nodr\fl{rr} && 3,2\\
    \nodr\fl{rr}&&2,0&\nodr\fl{rr}&&1,2&\nodr\fl{rr} && 2,1&&\\
    \nodB\fl{rr}&&4,7&2,2&&&\nodB\fl{rr}&&3,6\\
    1,8&&&&&&1,1&&& }
\end{displaymath}
\end{example}

By replacing the choice made by agent $p$ by the agent $p$ herself, the function \textsf{st2g}
associates a game with a pair consisting of a strategy and an agent:
\begin{displaymath}
\begin{array}{l@{\quad}c@{\quad}ll@{~}l}
  \mathsf{st2g}(\strat{f}, p) &=& \game{f}\\
  \mathsf{st2g}(\strat{x, st_1, st_2}, p) &=& \mathbf{if~} x \in \Player
  &\mathbf{~then~} &\game{x, \mathsf{st2g}(st_1, p), \mathsf{st2g}(st_2, p)} \\
  &&&\mathbf{~else~} &\game{p, \mathsf{st2g}(st_1, p), \mathsf{st2g}(st_2, p)}.
\end{array}
\end{displaymath}

If a strategy $st$ is really the strategy of agent $p$ it should contain nowhere $p$
and should contain a choice $c$ instead.  In this case we say that $st$ is \emph{full for}
$p$ and we write it $st\fullp$.
\begin{eqnarray*}
  \strat{f} \fullp \\
  \strat{x, st_1, st_2}\fullp &"<=(c)>"& (x`;\textsf{Choice}"=>" x \neq p) \wedge st_1\fullp \wedge st_2\fullp.
\end{eqnarray*}
We can sum strategies to make a profile. But for that we have to assume that
all strategies are full and have the same game. We say that the strategies are
\emph{consistent}. 
In other words, $(st_p)_{p`:\Player}$
is a family of strategies such that:
\begin{itemize}
\item $`A p`:\Player, st_p \fullp$,
\item  there exists a game $g$ such that for all $p$ in $\Player$,  \textsf{st2g}
  returns $g$, more formally  $`E g`:\mathsf{Game},`A p`:\Player, \mathsf{st2g}(st_p) = g$.
\end{itemize}
We define the sum
\begin{math} \displaystyle
  \bigoplus_{p`:\Player} st_p
\end{math}
of consistent strategies as follows:
\begin{eqnarray*}
  \bigoplus_{p`:\Player} \strat{f} &=& \prof{f}\\
  \strat{c, st_{p',1}, st_{p',2}} \oplus \bigoplus_{p`:\Player\setminus p'} \strat{p', st_{p,1}, st_{p,2}}  &=& \prof{p',c,   \bigoplus_{p`:\Player} st_{p,1},   \bigoplus_{p`:\Player} st_{p,2}}.
\end{eqnarray*}
We can show that the game  of all the strategies is the game of the
strategy profile which is the sum of the strategies.
\begin{proposition}
  $\mathsf{st2g}(st_{p'},p') = \mathsf{game}(\displaystyle \bigoplus_{p`:\Player} st_p).$
\end{proposition}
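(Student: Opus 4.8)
The plan is to argue by coinduction, using that \textsf{Game} is the final coalgebra so that equality of games coincides with bisimilarity. Concretely, I would fix a consistent family $(st_p)_{p`:\Player}$ and introduce the relation $R$ on \textsf{Game} that relates $\mathsf{game}(\bigoplus_{p`:\Player} st_p)$ with the common game $g$ determined by consistency, as the family ranges over all consistent families; I would then show $R$ is a bisimulation, after which the proposition follows from finality (bisimilar elements of a final coalgebra are equal). The first observation is that the consistency hypothesis makes $\mathsf{st2g}(st_{p'},p')$ a single game $g$ independent of the choice of $p'`:\Player$, so it is enough to relate $\mathsf{game}(\bigoplus_p st_p)$ to this common $g$, which is exactly what $R$ records.

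To check that $R$ is a bisimulation I would split on the shape of $g$. If $g=\game{f}$ is an ending game, then no $st_p$ can have the form $\strat{x,st_1,st_2}$, since $\mathsf{st2g}(\strat{x,st_1,st_2},p)$ is always a node game; hence each $st_p=\strat{f_p}$, and consistency forces $f_p=f$. The defining equation of the sum then gives $\bigoplus_p st_p=\prof{f}$ and $\mathsf{game}(\prof{f})=\game{f}=g$, so the two games coincide outright and there is nothing left to relate.

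The node case $g=\game{q,g_d,g_r}$ is the heart of the argument. Here each $st_p=\strat{x_p,st_{p,1},st_{p,2}}$, and $\mathsf{st2g}(st_p,p)=g$ forces the head player $q_p$ (which is $x_p$ when $x_p`:\Player$ and $p$ otherwise) to equal $q$, together with $\mathsf{st2g}(st_{p,1},p)=g_d$ and $\mathsf{st2g}(st_{p,2},p)=g_r$. The step I expect to be the main obstacle is to extract, from fullness and consistency, the precise head pattern demanded by the definition of $\oplus$: for the node player $q$, fullness of $st_q$ for $q$ rules out $x_q=q$, and $q_q=q$ then forces $x_q`:\textsf{Choice}$, say $x_q=c$; for each remaining player $p\neq q$, having $x_p`:\textsf{Choice}$ would give $q_p=p\neq q$, so necessarily $x_p`:\Player$ and $x_p=q_p=q$. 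Thus $st_q=\strat{c,st_{q,1},st_{q,2}}$ while $st_p=\strat{q,st_{p,1},st_{p,2}}$ for $p\neq q$, which matches exactly the left-hand side of the defining equation of $\oplus$.

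With the head pattern in place, applying the definition of the sum yields $\bigoplus_p st_p=\prof{q,c,\bigoplus_p st_{p,1},\bigoplus_p st_{p,2}}$, and hence $\mathsf{game}(\bigoplus_p st_p)=\game{q,\mathsf{game}(\bigoplus_p st_{p,1}),\mathsf{game}(\bigoplus_p st_{p,2})}$. It then remains to verify that the two subfamilies stay inside $R$: from $\mathsf{st2g}(st_{p,1},p)=g_d$ for every $p$ (resp. $\mathsf{st2g}(st_{p,2},p)=g_r$) the families $(st_{p,1})_p$ and $(st_{p,2})_p$ are each consistent with common game $g_d$ (resp. $g_r$), and the coinductive clause of $\fullp$ propagates fullness from $st_p$ to $st_{p,1}$ and $st_{p,2}$. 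Therefore the heads agree ($q$ on both sides) and the corresponding direct subgames, $\mathsf{game}(\bigoplus_p st_{p,1})$ and $g_d$ (resp. $\mathsf{game}(\bigoplus_p st_{p,2})$ and $g_r$), are again related by $R$. This closes the bisimulation and, by finality, establishes $\mathsf{st2g}(st_{p'},p')=\mathsf{game}(\bigoplus_{p`:\Player} st_p)$.
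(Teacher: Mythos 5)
The paper asserts this proposition without proof --- it is stated right after the definition of $\oplus$ and left as evident --- so there is no argument in the paper to compare yours against: you are supplying a missing proof, and the proof you propose is correct. Your route, namely working in the final coalgebra $\mathsf{Game}$, taking the relation $R$ that pairs $\mathsf{game}\bigl(\bigoplus_{p\in\Player} st_p\bigr)$ with the common game $g$ of a consistent family, checking that $R$ is a bisimulation by case analysis on the shape of $g$, and concluding by finality, is the natural one and matches the style the paper uses for neighbouring statements (cf.\ Proposition~\ref{prop:escal}, whose proof is announced as ``by coinduction on the definition of \textsf{st2g}, \textsf{game}, \ldots''). The substantive content is exactly the step you single out as the main obstacle: $\oplus$ is only defined on families whose heads fit the pattern of its second defining equation (one player holding a choice at the head, every other player holding that player's name), so one must derive this pattern from fullness plus consistency, and then re-establish fullness and consistency for the two subfamilies to keep the corecursion productive; that is what makes the sum well defined at every level and what closes the bisimulation. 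Two small remarks. First, the paper's displayed clause for fullness, $(x \in \mathsf{Choice} \Rightarrow x \neq p)$, is vacuous as written, since a choice never equals a player; your proof uses the intended reading $(x \in \Player \Rightarrow x \neq p)$, which is what the surrounding prose (``it should contain nowhere $p$'') describes, and without which your key step of excluding $x_q = q$ --- and indeed the definedness of $\oplus$ itself --- would fail. Second, the proposition as printed leaves its hypotheses implicit; you are right to make consistency of the family $(st_p)_{p\in\Player}$ the standing assumption, since the statement is meaningless without it.
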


\section{Infinipede games and the 0,1-game}
\label{sec:0-1}

We will restrict to simple games which have the shape of combs,
\begin{displaymath}
  \xymatrix{
*++[o][F]{\Al} \ar@/^/[r]^r \ar@/^/[d]^{d} &*++[o][F]{\Be} \ar@/^/[r]^r \ar@/^/[d]^{d}
&*++[o][F]{\Al} \ar@/^/[r]^r \ar@/^/[d]^{d} &*++[o][F]{\Be} \ar@/^/[r]^r \ar@/^/[d]^{d} 
&*++[o][F]{\Al} \ar@/^/[r]^r \ar@/^/[d]^{d} &*++[o][F]{\Be} \ar@{.>}@/^/[r]^r \ar@/^/[d]^{d} 
&\ar@{.>}@/^/[r]^r \ar@{.>}@/^/[d]^{d} &\ar@{.>}@/^/[r]^r \ar@{.>}@/^/[d]^{d} &\\
f_1&f_2&f_3&f_4&f_5&f_6&&
}
\end{displaymath}
At each step the agents have only two choices, namely to stop or to continue. Let us
call such a game, an \emph{infinipede}.

We introduce infinite games by means of equations.  Let us see how this applies to
define the $0,1$-game. First consider two payoff functions:
\begin{eqnarray*}
  f_{0,1} &=& \All "|->" 0, \Be "|->" 1\\
  f_{1,0} &=& \All "|->" 1, \Be "|->" 0
\end{eqnarray*}
we define two games 
\begin{eqnarray*}
  \GOI &=& \game{\All, \game{f_{0,1}}, \GIO}\\
  \GIO &=& \game{\Bel, \game{f_{1,0}}, \GOI}
\end{eqnarray*}
This means that we define an infinite sequential game $\GOI$ in which agent $\All$ is
the first player and which has two subgames: the trivial game $\game{f_{1,0}}$ and
the game $\GIO$ defined in the other equation.  The game $\GOI$ can be pictured as
follows:
\begin{displaymath}
  \xymatrix{
*++[o][F]{\Al} \ar@/^/[r]^r \ar@/^/[d]^d &*++[o][F]{\Be} \ar@/^/[r]^r \ar@/^/[d]^d
&*++[o][F]{\Al} \ar@/^/[r]^r \ar@/^/[d]^d &*++[o][F]{\Be} \ar@/^/[r]^r \ar@/^/[d]^d 
&*++[o][F]{\Al} \ar@/^/[r]^r \ar@/^/[d]^d &*++[o][F]{\Be} \ar@{.>}@/^/[r]^r \ar@/^/[d]^d 
&\ar@{.>}@/^/[r]^r \ar@{.>}@/^/[d]^d &\ar@{.>}@/^/[r]^r\ar@{.>}@/^/[d]^d&\ar@{}[d]^{\ldots}\\
0,1&1,0&0,1&1,0&0,1&1,0&&&
}
\end{displaymath}
or more simply in Figure~\ref{fig:boucle}.a.
\begin{figure}[htb!]
  \centering
  \doublebox{\parbox{\iffullpage .8\textwidth\else \textwidth\fi}{ \begin{center} \begin{math}
      \begin{array}{ccc}
        \xymatrix@C=10pt{
          &\ar@{.>}[r]& *++[o][F]{\Al} \ar@/^1pc/[rr]^r \ar@/^/[d]^d 
          &&*++[o][F]{\Be} \ar@/^1pc/[ll]^r \ar@/^/[d]^d \\
          &&0,1&&1,0
        }
        &
        \xymatrix@C=10pt{
          &\ar@{.>}[r]& *++[o][F]{\Al} \ar@[blue]@{=>}@/^1pc/[rr]^r \ar@/^/[d]^d 
          &&*++[o][F]{\Be} \ar@/^1pc/[ll]^r \ar@[blue]@{=>}@/^/[d]^d \\
          &&1,0&&0,1
        } 
        &
        \xymatrix@C=10pt{
          &\ar@{.>}[r]& *++[o][F]{\Al} \ar@/^1pc/[rr]^r \ar@[blue]@{=>}@/^/[d]^d 
          &&*++[o][F]{\Be} \ar@[blue]@{=>}@/^1pc/[ll]^r \ar@/^/[d]^d \\
          &&0,1&&1,0
        }\\\\
        a\ (\GOI)  & b \ (\sioa) & c \ (\siob)
      \end{array}
\end{math}
\end{center}
}
}
\caption{The $0,1$-game and two equilibria seen compactly}
\label{fig:boucle}
\end{figure}
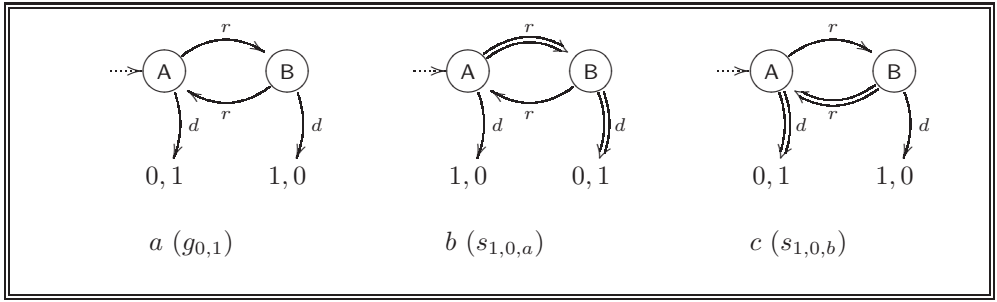

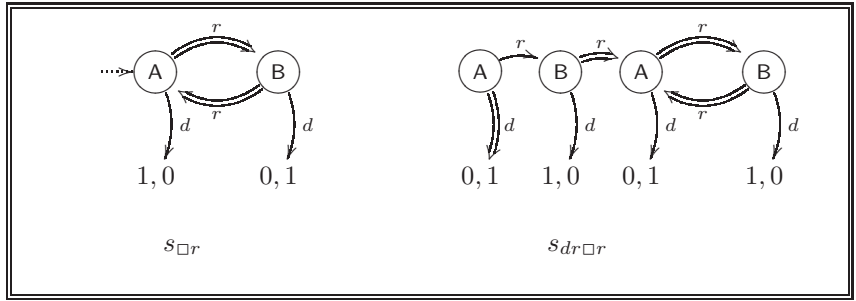
\begin{figure}[htb!]
  \centering
  \doublebox{\parbox{\iffullpage .65\textwidth\else .85\textwidth\fi}{ \begin{center} \begin{math}
     \begin{array}{ccc}
         \xymatrix@C=10pt{
          &\ar@{.>}[r]& *++[o][F]{\Al} \ar@[blue]@{=>}@/^1pc/[rr]^r \ar@/^/[d]^d 
          &&*++[o][F]{\Be} \ar@[blue]@{=>}@/^1pc/[ll]^r \ar@/^/[d]^d \\
          &&1,0&&0,1
        } 
        &\qquad\qquad
        \xymatrix@C=10pt{
          *++[o][F]{\Al} \ar@/^/[r]^r \ar@[blue]@{=>}@/^/[d]^d 
          &*++[o][F]{\Be} \ar@[blue]@{=>}@/^/[r]^r \ar@/^/[d]^d
          & *++[o][F]{\Al} \ar@[blue]@{=>}@/^1pc/[rr]^r \ar@/^/[d]^d 
          &&*++[o][F]{\Be} \ar@[blue]@{=>}@/^1pc/[ll]^r \ar@/^/[d]^d \\
          0,1&1,0&0,1&&1,0
        }\\\\
        s_{\Box r}  & s_{d r \Box r}
      \end{array}
\end{math}
\end{center}
}
}
\caption{Two examples of strategy profiles for the game $\GOI$}
\label{fig:alwr}
\end{figure}
 
From now on, we assume that we consider only strategy profiles $s$ whose game is the
0,1-game, that is $\mathsf{game}(s)=\GOI$. They are characterized by the following
predicates
\begin{eqnarray*}
  \SOI(s) &"<=(c)>"& s = \prof{\All, c, \prof{f_{0,1}}, s'} \wedge \SIO(s')\\
  \SIO(s) &"<=(c)>"& s = \prof{\Bel, c, \prof{f_{1,0}}, s'} \wedge \SOI(s').
\end{eqnarray*}
\begin{example}\label{ex:alwr}
  Here is a strategy profile (Figure~\ref{fig:alwr})
  \begin{displaymath}
    s_{\Box r} = \prof{\All,r, \prof{f_{0,1}}, \prof{\Bel,r,\prof{f_{1,0}},s_{\Box r}}  }
  \end{displaymath}
  where both agents continue forever. Notice that $ \SOI(s_{\Box r})$, $\neg (\conv{s_{\Box r}})$ and a
  fortiori $\neg (\Conv{s_{\Box r}})$. Said in words,
  \begin{enumerate}
  \item $s_{\Box r}$ has game $\GOI$,
  \item $s_{\Box r}$ is not convergent,
  \item $s_{\Box r}$ is not strongly convergent.
  \end{enumerate}

\end{example}
\begin{example}\label{ex:ddBoxr}
  Consider now the strategy profile (Figure~\ref{fig:alwr})
  \begin{displaymath}
    s_{d\Box r} = \prof{\All,d, \prof{f_{0,1}}, \prof{\Bel,r,\prof{f_{1,0}},s_{\Box r}}  }.
  \end{displaymath}
  \iffullpage\else\begin{sloppypar}\fi
    This time $ \SOI(s_{d\Box r})$, $\conv{s_{d\Box r}}$ and $\neg (\Conv{s_{dd\Box
        r}})$. Said in words,
    \begin{enumerate}
    \item $s_{d\Box r}$  has game $\GOI$,
    \item $s_{d\Box r}$  is convergent,
    \item $s_{d\Box r}$ is not strongly convergent.
    \end{enumerate}

We have $\widehat{s_{d\Box r}}(\Al) = 0$ and ${\widehat{s_{dd\Box
          r}}(\Be) = 1}$. But $s_{d\Box r}$ is not strongly convergent since
    $\prof{\Bel,r,\prof{f_{1,0}},s_{\Box r}}$ is not convergent.
   \iffullpage\else\end{sloppypar}\fi
\end{example}

Notice that the $0,1$-game we consider is somewhat a zero-sum game, but we are not
interested in this aspect.  Moreover, a very specific instance of a $0,1$ game has
been considered (by Ummels~\cite{DBLP:conf/fossacs/Ummels08} for instance), but these
authors are not interested in the general structure of the game, but in a specific
model on a finite graph, which is not general enough for our taste.  Therefore for
Ummels the $0,1$-game
is not a direct generalization of finite sequential games (replacing induction by
coinduction) and not a framework to study escalation.

\section{Subgame perfect equilibria}
\label{sec:subg-pertf-equil}

Among the strategy profiles, we can select specific ones that are called
\emph{subgame perfect equilibria}~\cite{selten65:_spiel_behan_eines_oligop_mit}.
Subgame perfect equilibria are specific strategy profiles that fulfill a predicate
\textsf{SPE}.  This predicate relies on another predicate \textsf{PE} which checks a
local property.
\[
\begin{array}{ll}
\mathsf{PE}(s) \quad "<=>" \quad \Conv{s}\  &\wedge\ s = \prof{p, d, s_d, s_r} "=>"
\widehat{s_d}(p) \ge \widehat{s_r}(p)\\
& \wedge\ s = \prof{p, r, s_d, s_r} "=>"
\widehat{s_r}(p) \ge \widehat{s_d}(p)
\end{array}
\] %
A strategy profile is a subgame perfect equilibrium if the property \textsf{PE}
holds always:
\[\mathsf{SPE} = \Box \mathsf{PE}.\]
\begin{example}\label{ex:spe}
  In Example~\ref{ex:stratprof} we have $\mathsf{SPE}(s_1)$, $\mathsf{SPE}(s_2)$ and
  $\neg \mathsf{SPE}(s_3)$.  In Example~\ref{ex:alwr}, %
  $\neg \mathsf{SPE}(s_{\Box r})$ since $\neg \Conv{(s_{\Box r})}$.
\end{example}
We may now wonder what the subgame perfect equilibria of the 0,1-game are.  We
present four of them in Figure~\ref{fig:boucle}.b, Figure~\ref{fig:boucle}.c and Figure~\ref{fig:ABAB}.  But there
are others.  To present them, let us define a predicate \emph{``\All{} continues and
  \Bel{} eventually stops''}
\begin{eqnarray*}
  \AcBes (s)  "<=(i)>"   s=\prof{p,c,\prof{f}, s'} &"=>"&
  (p =  \All \wedge f = {f_{0,1}} \wedge c=r \wedge \AcBes(s')) \vee\\ 
  &&(p = \Bel \wedge f={f_{1,0}} \wedge (c=d \vee \AcBes(s'))
\end{eqnarray*}
\begin{proposition}\label{prop:AcBes}
$(\SIO(s) \vee \SOI(s))  "=>" \AcBes(s) "=>" \widehat{s} = f_{1,0}$
\end{proposition}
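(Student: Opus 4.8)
The plan is to read the statement with $\Rightarrow$ associating to the right, i.e.\ as the assertion that whenever $s$ is a profile of the $0,1$-game (so that $\SIO(s)$ or $\SOI(s)$ holds) and $\AcBes(s)$ holds, then $\widehat{s}=f_{1,0}$. Since $\AcBes$ is introduced as the \emph{least} predicate closed under its defining clause, I would argue by induction on the derivation of $\AcBes(s)$, carrying the invariant $\SIO(s)\vee\SOI(s)$ as a side hypothesis that is re-established at each recursive step. Under this invariant $s$ is always an internal node $\prof{p,c,\prof{f},s'}$, so unfolding $\AcBes(s)$ yields exactly the two clauses of its definition.

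In the first clause $p=\All$, $f=f_{0,1}$, $c=r$ and $\AcBes(s')$. Because $c=r$, the payoff equations give $\widehat{s}=\widehat{s'}$; and since $p=\All$ forces $\SOI(s)$, which unfolds to $\SIO(s')$, the invariant passes to $s'$. The induction hypothesis applied to $s'$ then gives $\widehat{s'}=f_{1,0}$, whence $\widehat{s}=f_{1,0}$.

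In the second clause $p=\Bel$, $f=f_{1,0}$ and $c=d\vee\AcBes(s')$. If $c=d$ this is the base case: the down payoff rule gives $\widehat{s}=\widehat{\prof{f_{1,0}}}=f_{1,0}$ directly. If instead $c=r$, the disjunction forces $\AcBes(s')$, the payoff equations give $\widehat{s}=\widehat{s'}$, and $p=\Bel$ forces $\SIO(s)$, which unfolds to $\SOI(s')$; so once more the invariant passes to $s'$ and the induction hypothesis yields $\widehat{s'}=f_{1,0}$.

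Two points deserve care, and both concern why the argument is sound rather than the arithmetic. First, the hypothesis $\SIO(s)\vee\SOI(s)$ is genuinely needed: because $\AcBes$ is defined through an implication whose premise is ``$s$ is an internal node'', it holds \emph{vacuously} at a leaf $\prof{f}$, where $\widehat{\prof{f}}=f$ need not be $f_{1,0}$; restricting to $0,1$-profiles forces $s$ to be internal, and the alternation $\SOI(s)\Rightarrow\SIO(s')$, $\SIO(s)\Rightarrow\SOI(s')$ keeps every right subprofile internal, so this vacuous case never arises along the recursion. Second, the main obstacle is the legitimacy of the induction itself: since $\AcBes$ is the \emph{least} predicate closed under its clause, the profile $s_{\Box r}$ in which $\All$ always continues and $\Bel$ never plays $d$ is not in $\AcBes$; it is exactly this least-fixed-point reading that guarantees $\Bel$ must eventually choose $d$ along any derivation, so the recursion terminates at the base case above and the induction is well founded. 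Everything else reduces to the payoff equations for $\widehat{\cdot}$ and the alternation of $\SOI$ and $\SIO$.
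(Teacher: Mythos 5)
Your proof is correct and takes essentially the same route as the paper's: induction over the inductive definition of $\AcBes$ under the standing hypothesis $\SIO(s)\vee\SOI(s)$, using the alternation $\SOI(s)\Rightarrow\SIO(s')$, $\SIO(s)\Rightarrow\SOI(s')$ to propagate that hypothesis, and the same three-case analysis ($\All$ continues with $r$; $\Bel$ stops with $d$; $\Bel$ continues with $r$). Your write-up is in fact slightly more careful than the paper's, since it makes explicit why the vacuous (leaf) case of $\AcBes$ is ruled out by the hypothesis, and it avoids the paper's typo in the first case, where $\widehat{s}=\widehat{s'}=f_{0,1}$ is written although $f_{1,0}$ is meant.
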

\begin{proof}
Assume $\SIO(s) \vee \SOI(s)$.
  If $s=\prof{p,c,\prof{f}, s'}$, then $\SOI(s') \vee \SIO(s')$.  Therefore if
  $\AcBes(s')$, by induction, $\widehat{s'} = f_{1,0}$. By cases:
 \begin{itemize}
 \item If $p=\All \wedge c=r$, then $\AcBes(s')$ and by definition of $\widehat{s}$, we have $\widehat{s} =
   \widehat{s'}=f_{0,1}$
 \item if $p=\Bel \wedge c=d$, the $\widehat{s}=\widehat{\prof{f_{1,0}}} = f_{1,0}$.\pagebreak[2]
 \item if $p=\Bel \wedge c=r$,  then $\AcBes(s')$ and by definition of $\widehat{s}$,
   $\widehat{s}=\widehat{s'} = f_{1,0}$.
 \end{itemize}
\end{proof}
Like we generalize \textsf{PE}  to \textsf{SPE} by applying the modality $\Box$, we
generalize $\AcBes$ into $\SAcBes$ by stating:
\[\SAcBes = \Box\AcBes.\]
There are at least two profiles which satisfies $\SAcBes$ namely $\sioa$ and $\siob$
which have been studied in~\cite{DBLP:journals/corr/abs-1112-1185} and pictured in
Figure~\ref{fig:boucle}:
\begin{displaymath}
  \begin{array}{l@{\qquad\qquad}l}
    \begin{array}{lcl}
      \sioa &"<=(c)>"& \prof{\All, r, \prof{f_{0,1}}, \siob}\\
      \soia &"<=(c)>"& \prof{\All, d, \prof{f_{0,1}}, \soib}
    \end{array}
&
    \begin{array}{lcl}
      \siob &"<=(c)>"& \prof{\Bel, d, \prof{f_{1,0}}, \sioa}\\
      \soib &"<=(c)>"& \prof{\Bel, r, \prof{f_{1,0}}, \soia}
    \end{array}
  \end{array}
\end{displaymath}
In Figure~\ref{fig:ABAB}, we give other strategy profiles which fulfill the predicate
$\SAcBes$.  For the first one we draw only the beginning of the strategy profile, but
the reader can imagine that he continues a strategy profile in which $\All$ always
continues whereas $\Bel$ does not always continue, in other words, $\Bel$ stops infinitely
often.
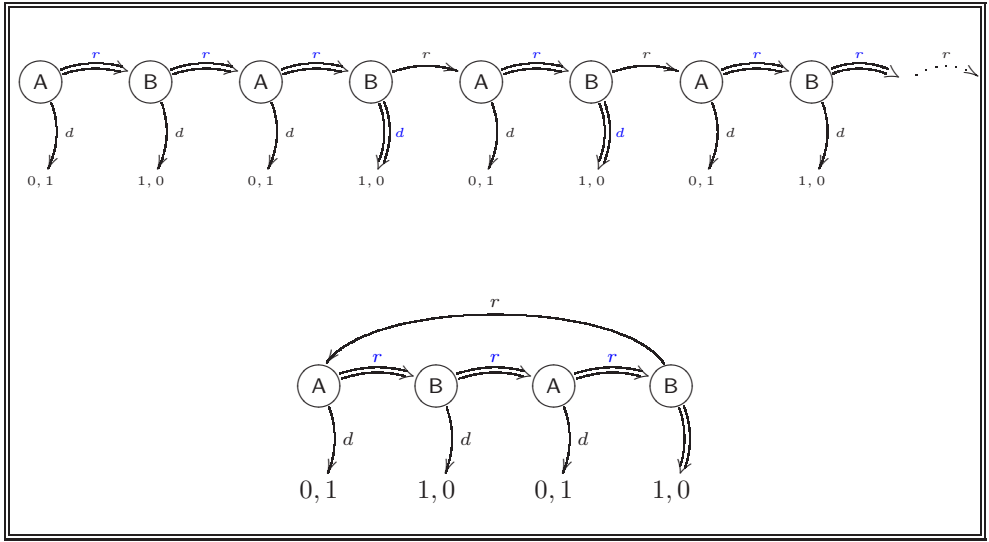
\begin{figure}[ht]
  \centering 
  \doublebox{\parbox{\textwidth}{ 
  \iffullpage 
   \begin{displaymath}
    \xymatrix{
      \nodA\flr{r}&\nodB\flr{r}&\nodA\flr{r}&\nodB\fld{r}&\nodA\flr{r}&\nodB\fld{r}%
      &\nodA\flr{r} &\nodB\flr{r}&\nodA\flr{r} &\ar@{.>}@/^/[r]^r&\\
      0,1&1,0&0,1&1,0&0,1&1,0&0,1&1,0& 0,1&&}
  \end{displaymath}
  \else
  \begin{tiny}
       \begin{displaymath}
    \xymatrix{
      \nodA\flr{r}&\nodB\flr{r}&\nodA\flr{r}&\nodB\fld{r}&\nodA\flr{r}&\nodB\fld{r}%
      &\nodA\flr{r} &\nodB\flr{r}&\ar@{.>}@/^/[r]^r&\\
      0,1&1,0&0,1&1,0&0,1&1,0&0,1&1,0&&}
  \end{displaymath}
  \end{tiny}
  \fi

\bigskip

\begin{displaymath}
    \xymatrix{
      \nodA\flr{r}&\nodB\flr{r}&\nodA\flr{r}&
      \nodB\ar@[blue]@/^/@2[d]\ar@/_2pc/@(u,u)[lll]_r\\
      0,1&1,0&0,1&1,0}
  \end{displaymath}
}}
  \caption{Other equilibria of the $0,1$ game.}
  \label{fig:ABAB}
\end{figure}
\begin{proposition}\label{prop:SAcBes-conv}
  \(\SAcBes(s) "=>" \Conv{s}.\)
\end{proposition}
We may state the following proposition.  
\begin{proposition}
  \(`A s, (\SOI(s) \vee \SIO(s)) "=>" (\SAcBes(s) "=>"  \mathsf{SPE}(s)).\)
\end{proposition}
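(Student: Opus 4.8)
The statement asserts $\SAcBes(s) \Rightarrow \mathsf{SPE}(s)$ for every profile lying on the $0,1$-game (i.e.\ satisfying $\SOI(s)\vee\SIO(s)$), and since both sides are $\Box$-modalities ($\SAcBes = \Box\AcBes$ and $\mathsf{SPE} = \Box\mathsf{PE}$), the natural instrument is \emph{coinduction}. The plan is to exhibit an invariant $X$ contained in the post-fixed point defining $\Box\mathsf{PE}$. Concretely I would take
\[ X = \{\, s : (\SOI(s)\vee\SIO(s)) \wedge \SAcBes(s) \,\} \;\cup\; \{\, \prof{f} : f \text{ a payoff function} \,\}, \]
and verify that for every $s \in X$ we have (a) $\mathsf{PE}(s)$, and (b) whenever $s = \prof{p,c,s_d,s_r}$, both $s_d \in X$ and $s_r \in X$. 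By the coinductive characterisation of $\Box\mathsf{PE}$, this gives $\Box\mathsf{PE}(s)=\mathsf{SPE}(s)$ on all of $X$, hence in particular on the profiles named in the statement.

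For step (a), the leaves $\prof{f}$ are immediate, since $\Conv{\prof{f}}$ holds and the two implications in $\mathsf{PE}$ are vacuous. For the genuine profiles I would first invoke Proposition~\ref{prop:SAcBes-conv} to get $\Conv{s}$, which supplies the convergence conjunct of $\mathsf{PE}$. I then decompose $s = \prof{p,c,s_d,s_r}$ and split on $\SOI$ versus $\SIO$. In both cases $s_d$ is a leaf ($\prof{f_{0,1}}$ resp.\ $\prof{f_{1,0}}$), while the right subprofile $s_r$ satisfies $\SIO(s_r)$ resp.\ $\SOI(s_r)$ together with $\AcBes(s_r)$, extracted from $\Box\AcBes(s)$; hence Proposition~\ref{prop:AcBes} yields $\widehat{s_r} = f_{1,0}$, so that $\widehat{s_r}(\All)=1$ and $\widehat{s_r}(\Bel)=0$. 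The decisive observation is that at an $\All$-node $\AcBes(s)$ \emph{forces} $c=r$, because the $p=\All$ disjunct is the only one available; there the required inequality reads $\widehat{s_r}(\All)=1 \ge 0 = \widehat{s_d}(\All)$. At a $\Bel$-node, both $\widehat{s_d}(\Bel)$ and $\widehat{s_r}(\Bel)$ equal $0$, so whichever of $c=d$ or $c=r$ occurs, the $\mathsf{PE}$ inequality holds with equality. Thus $\mathsf{PE}(s)$ in every case.

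For step (b), when $s = \prof{p,c,s_d,s_r}$ the down-child $s_d$ is a leaf and so lies in $X$ by construction, while the right-child $s_r$ inherits $\SIO(s_r)\vee\SOI(s_r)$ from the defining clause of $\SOI/\SIO$ and $\SAcBes(s_r)$ from propagation of the $\Box$-modality; hence $s_r \in X$. This closes the coinduction.

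I expect the main difficulty to be organisational rather than computational: keeping the coinductive bookkeeping honest, and in particular remembering to put the leaves $\prof{f}$ inside $X$ (without them, step (b) breaks at the down-children, since a leaf satisfies neither $\SOI$ nor $\SIO$), and correctly peeling off $\AcBes(s_r)$ and $\Conv{s}$ from $\SAcBes(s)$ via the $\Box$-modality and Proposition~\ref{prop:SAcBes-conv}. The only genuinely content-bearing case is the $\All$-node, where it is exactly the constraint $c=r$ encoded in $\AcBes$ that makes the subgame-perfection inequality come out right; the $\Bel$-node is forced to an equality because every continuation is worth $f_{1,0}$ by Proposition~\ref{prop:AcBes}, leaving $\Bel$ indifferent between stopping and continuing.
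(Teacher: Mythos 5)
Your proposal is correct and follows essentially the same route as the paper's proof: a coinduction on $\mathsf{SPE}=\Box\mathsf{PE}$, obtaining the convergence conjunct from Proposition~\ref{prop:SAcBes-conv} and the payoff identity $\widehat{s_r}=f_{1,0}$ from Proposition~\ref{prop:AcBes}, then checking the $\mathsf{PE}$ inequalities by cases on the node. Your post-fixed-point packaging of the coinduction, and your explicit treatment of the $\Bel$-node with choice $d$ (which the paper's case analysis leaves tacit), are presentational refinements of the same argument.
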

\begin{proof}
Since \textsf{SPE} is a coinductively defined predicate, the proof is by coinduction.

  Given an $s$, we have to prove \(`A s, \Box \AcBes(s) \wedge (\SOI(s) \vee \SIO(s))
  "=>" \Box\mathsf{PE}(s).\)

  For that we assume $\Box \AcBes(s) \wedge (\SOI(s) \vee \SIO(s))$ and in addition
  (coinduction principle) \(\Box\mathsf{PE}(s')\) for all strict subprofiles $s'$ of
  $s$ and we prove $\mathsf{PE}(s)$. In other words, $\Conv{s} \wedge \prof{p, d, s_d,
    s_r} "=>" \widehat{s_d}(p) \ge \widehat{s_r}(p)\ \wedge\ \prof{p, r, s_d, s_r}
  "=>" \widehat{s_r}(p) \ge \widehat{s_d}(p). $

By Proposition~\ref{prop:SAcBes-conv}, we have $\Conv{s}$.

By Proposition~\ref{prop:AcBes}, we know that for every subprofile $s'$ of a profile
$s$ that satisfies $\SIO(s) \vee \SOI(s)$ we have $\widehat{s'} = f_{1,0}$ except
when $s'=\prof{f_{0,1}}$.  Let us prove $\prof{p, d, s_d, s_r} "=>" \widehat{s_d}(p)
\ge \widehat{s_r}(p)\ \wedge\ \prof{p, r, s_d, s_r} "=>" \widehat{s_r}(p) \ge
\widehat{s_d}(p). $ Let us proceed by case:
  \begin{itemize}
  \item $s=\prof{\All, r, \prof{f_{0,1}}, s'}$. Then $\SOI(s)$ and $\SIO(s')$. Since
    $\Box \AcBes(s)$, we have $\AcBes(s')$, therefore $\widehat{s'} = f_{1,0}$ hence
    $\widehat{s'}(\Al) = 1$ and $f_{0,1}(\Al) = 0$, henceforth $\widehat{s'}(\Al) \ge
    f_{0,1}(\Al).$
 \item $s=\prof{\Bel, r, \prof{f_{1,0}}, s'}$. Then $\SIO(s)$ and $\SOI(s')$. Since
    $\Box \AcBes(s)$, we have $\AcBes(s')$, therefore $\widehat{s'} = f_{1,0}$ hence
    $\widehat{s'}(\Be) = 0$ and $f_{1,0}(\Be) = 0$, henceforth $\widehat{s'}(\Be) \ge
    f_{1,0}(\Be).$
  \end{itemize}
\end{proof}
Symmetrically we can define a predicate $\BcAes$ for ``$\Bel$ continues and $\All$
eventually stops'' and a predicate $\SBcAes$ which is $\SBcAes = \Box\ \BcAes$ which
means that $\Bel$ always continues and $\All$ stops infinitely often. With the same
argument as for $\SAcBes$ we can conclude :
\begin{proposition}
  \(`A s, (\SOI(s) \vee \SIO(s)) "=>" \SBcAes(s) "=>" \mathsf{SPE}(s).\)%
\end{proposition}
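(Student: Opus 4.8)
The plan is to transport the proof of the $\SAcBes$ version through the symmetry that exchanges the two agents together with the two payoff functions $f_{0,1}$ and $f_{1,0}$. Concretely, $\BcAes$ is the inductive predicate
\begin{eqnarray*}
  \BcAes(s) \quad "<=(i)>" \quad s = \prof{p,c,\prof{f}, s'} &"=>"&
  (p = \Bel \wedge f = f_{1,0} \wedge c = r \wedge \BcAes(s')) \ \vee \\
  && (p = \All \wedge f = f_{0,1} \wedge (c = d \vee \BcAes(s'))),
\end{eqnarray*}
expressing that $\Bel$ always continues while $\All$ eventually stops, and $\SBcAes = \Box\,\BcAes$.

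First I would prove the payoff analogue of Proposition~\ref{prop:AcBes}, namely $(\SIO(s) \vee \SOI(s)) "=>" \BcAes(s) "=>" \widehat{s} = f_{0,1}$, by induction on $\BcAes$. The argument is the mirror image of the one for $\AcBes$: the only structural change is that the agent who eventually stops is now $\All$ and she stops on the leaf $\prof{f_{0,1}}$, so the payoff collected along the branch is $f_{0,1}$ instead of $f_{1,0}$. In the same way $\SBcAes(s) "=>" \Conv{s}$ holds by the symmetric counterpart of Proposition~\ref{prop:SAcBes-conv}.

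Next, since $\mathsf{SPE} = \Box\,\mathsf{PE}$ is coinductively defined, I would establish the proposition by coinduction, proving $`A s,\ \Box\,\BcAes(s) \wedge (\SOI(s) \vee \SIO(s)) "=>" \Box\,\mathsf{PE}(s)$. Assuming $\Box\,\BcAes(s) \wedge (\SOI(s) \vee \SIO(s))$ together with the coinduction hypothesis $\Box\,\mathsf{PE}(s')$ on every strict subprofile $s'$, it suffices to check $\mathsf{PE}(s)$. Convergence $\Conv{s}$ comes from the analogue of Proposition~\ref{prop:SAcBes-conv}, and the payoff lemma above applies to the continuation subprofile $s'$ of $s$, which still satisfies $\SOI$ or $\SIO$ and inherits $\BcAes$ from $\Box\,\BcAes(s)$, giving $\widehat{s'} = f_{0,1}$. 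I would then verify the local inequalities by cases. At a $\Bel$-node $\Bel$ always continues, so $s = \prof{\Bel, r, \prof{f_{1,0}}, s'}$ and the required $\widehat{s'}(\Be) \ge f_{1,0}(\Be)$ becomes $1 \ge 0$; at an $\All$-node either $\All$ continues, $s = \prof{\All, r, \prof{f_{0,1}}, s'}$ with the obligation $\widehat{s'}(\Al) \ge f_{0,1}(\Al)$, or $\All$ stops, $s = \prof{\All, d, \prof{f_{0,1}}, s'}$ with the obligation $f_{0,1}(\Al) \ge \widehat{s'}(\Al)$, and both reduce to $0 \ge 0$ since $\widehat{s'}(\Al) = f_{0,1}(\Al) = 0$.

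The main point to get right is the direction of these inequalities after the symmetry swap: the equilibrium payoff is now $f_{0,1}$, which strictly favours $\Bel$, so $\Bel$'s decision to continue is strictly optimal while $\All$ is exactly indifferent between stopping and continuing. This indifference of $\All$ is precisely what makes every profile satisfying $\SBcAes$ an equilibrium, and everything else is a mechanical transcription of the $\SAcBes$ argument.
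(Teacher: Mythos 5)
Your proposal is correct and is exactly the paper's intended argument: the paper proves this proposition simply by remarking that $\SBcAes = \Box\,\BcAes$ is the agent-swapped mirror of $\SAcBes$ and that ``the same argument'' applies, which is precisely the transcription you carry out (payoff lemma giving $\widehat{s} = f_{0,1}$, the convergence analogue, then coinduction on $\mathsf{SPE} = \Box\,\mathsf{PE}$ with the case analysis at \All- and \Bel-nodes). Your inequality directions are right -- \Bel{} strictly prefers $r$ ($1 \ge 0$) and \All{} is indifferent ($0 \ge 0$) -- so nothing is missing.
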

\begin{lemma}
Assume $\SOI(s)$ or $\SIO(s)$, then  \(\mathsf{SPE}(s) "=>" (\SAcBes(s)\vee \SBcAes(s))\)
\end{lemma}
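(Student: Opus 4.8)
The plan is to exploit the fact that, in the $0,1$-game, every convergent subprofile has payoff either $f_{0,1}$ or $f_{1,0}$, so the local optimality test inside \textsf{PE} becomes a sharp constraint on \emph{where} the agents are allowed to stop. First I would record the consequences of the hypothesis: from $\mathsf{SPE}(s) = \Box\mathsf{PE}(s)$ we obtain $\Conv{s}$, hence by $\Conv{s} \Leftrightarrow \Box\downarrow(s)$ that \emph{every} subprofile of $s$ is convergent and its payoff is defined. Reading any subprofile of a $\SOI$/$\SIO$ profile as an infinite comb whose nodes alternate between $\All$ (down-leaf $f_{0,1}$) and $\Bel$ (down-leaf $f_{1,0}$), convergence at every node means that from every position a \emph{down} is eventually chosen; equivalently the set of ``stop'' nodes (those choosing $d$) is cofinal in the comb, so there are infinitely many of them.

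Next I would turn the \textsf{PE} inequalities into a \emph{type-preservation} rule for consecutive stops. At a stopping $\All$-node $\prof{\All,d,\prof{f_{0,1}},s'}$ the test $\widehat{\prof{f_{0,1}}}(\Al) \ge \widehat{s'}(\Al)$ reads $0 \ge \widehat{s'}(\Al)$, so $\widehat{s'} = f_{0,1}$; since the only possible payoffs are $f_{0,1}$ and $f_{1,0}$, this forces the \emph{first} stop occurring in the continuation $s'$ to sit on an $\All$-node. Symmetrically, at a stopping $\Bel$-node $\prof{\Bel,d,\prof{f_{1,0}},s'}$ we get $\widehat{s'} = f_{1,0}$, forcing the first stop of $s'$ onto a $\Bel$-node. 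In words: the stop immediately following an $\All$-stop is again an $\All$-stop, and the stop following a $\Bel$-stop is again a $\Bel$-stop.

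Enumerating the (infinitely many) stops along the comb as $k_1 < k_2 < \dots$ in $\nat$, each $k_{i+1}$ is exactly the first stop inside the continuation that starts just after $k_i$, so the rule above gives $\mathrm{type}(k_{i+1}) = \mathrm{type}(k_i)$ for all $i$; hence all stops share one type, decided by the type of the first stop $k_1$ (which exists by convergence at the root). If every stop is a $\Bel$-stop then $\All$ never stops, i.e. $\All$ always continues while $\Bel$ stops infinitely often, which is $\SAcBes(s)$; if every stop is an $\All$-stop then $\Bel$ always continues while $\All$ stops infinitely often, which is $\SBcAes(s)$. Either way $\SAcBes(s) \vee \SBcAes(s)$, and I would handle $\SOI(s)$ and $\SIO(s)$ uniformly, since they differ only by shifting the $\All/\Bel$ parity of the comb, to which the type-preservation argument is insensitive.

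I expect the main obstacle to be the bookkeeping that upgrades ``all stops have the same type'' to the coinductive predicate $\Box\AcBes$ (resp.\ $\Box\BcAes$). Concretely, to conclude $\SAcBes(s)$ I must verify $\AcBes(t)$ at \emph{every} subprofile $t$, and $\AcBes$ is defined \emph{inductively}: its unfolding walks the comb, demanding $c=r$ at each $\All$-node and either $c=d$ or a further step at each $\Bel$-node, and it must terminate at a $\Bel$-stop of finite depth. The delicate point is precisely that this finite termination is guaranteed: cofinality of the stops furnished by $\Conv{s}$ ensures a $\Bel$-stop is reached within $t$, while ``$\All$ never stops'' ensures every intervening $\All$-node chooses $r$, so the inductive definition closes. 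Checking this interface between the cofinality coming from strong convergence and the least-fixpoint shape of $\AcBes$ is where the real care is needed; the inequality manipulations themselves are routine once the payoffs are pinned to $\{f_{0,1},f_{1,0}\}$.
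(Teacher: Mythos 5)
Your proposal is correct and is essentially the paper's own argument run in contrapositive form: the paper proceeds by contradiction, showing that if $\SAcBes(s)$ and $\SBcAes(s)$ both fail then either two consecutive stops of opposite type occur (refuted by exactly the $\mathsf{PE}$ inequalities you phrase positively as ``type preservation'') or $s_{\Box r}$ is a subprofile of $s$ (refuted by strong convergence, which you phrase positively as cofinality of the stop nodes). The mathematical content --- the inequality $0 \ge \widehat{s'}(p)$ at a stopping node pinning the next stop's owner, plus $\mathsf{SPE}(s) \Rightarrow \Conv{s}$ ruling out an all-$r$ tail --- is identical, so the only difference is organizational (direct versus by contradiction), with your final bookkeeping for $\Box\,\AcBes$ being the dual of the exhaustiveness of the paper's three-way case split.
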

\begin{proof}
  By contradiction.  Assume $(\neg \SAcBes(s)) \wedge (\neg \SBcAes(s))$.  This means that one of the following
  statements are fulfilled.
  \begin{itemize}
  \item \emph{There exist $s_{\Be}$ and $s_{\Al}$ such that $s_{\Al} =
      \prof{\All,d,\prof{f_{0,1}},s_{\Al}'}\precsim s_{\Be} =
      \prof{\Bel,d,\prof{f_{1,0}},s_{\Be}'}\precsim s$ and there are only ``$r$'s''
      between $\Bel$ and $\All$.}  Notice that $\widehat{s_{\Be}'}(\Be) = 1$ while
    $\widehat{\prof{f_{1,0}}}(\Be)=0$.  Since $\mathsf{SPE}(s)$ then
    $\mathsf{SPE}(s_{\Be})$ therefore $s_{\Be} =
    \prof{\Bel,d,\prof{f_{1,0}},s_{\Be}'}$ implies $\widehat{s_{\Be}'}(\Be) \le
    \widehat{\prof{f_{1,0}}}(\Be)$ which is a contradiction.
  \item \emph{There exist $s_{\Al}$ and $s_{\Be}$ such that $s_{\Be} =
      \prof{\Bel,d,\prof{f_{1,0}},s_{\Al}'}\precsim s_{\Al} =
      \prof{\All,d,\prof{f_{0,1}},s_{\Be}'}\precsim s$ and there are only ``$r$'s''
      between $\All$ and $\Bel$.}  The contradiction is obtained like above. 
  \item \emph{$s_{\Box r} \precsim s$}, which means that eventually $\All$ and $\Bel$
    continue forever and which is in contradiction with
    $\mathsf{SPE(s)}$ since $\neg \mathsf{SPE}(s_{\Box r})$ (see Example~\ref{ex:spe}).
  \end{itemize}
\end{proof}
$\SAcBes \vee \SBcAes$ fully characterizes \textsf{SPE} of
0,1-games, in other words.
\begin{theorem}
  \(`A s, (\SOI(s) \vee \SIO(s)) "=>" (\SAcBes(s) \vee \SBcAes(s) "<=>"
  \mathsf{SPE}(s)).\)
\end{theorem}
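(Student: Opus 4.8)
The statement is an equivalence guarded by the hypothesis $\SOI(s) \vee \SIO(s)$, so the plan is to fix an arbitrary $s$, assume $\SOI(s) \vee \SIO(s)$, and establish the two directions of the inner biconditional separately. Both directions turn out to be immediate consequences of the three results that immediately precede it, so no fresh coinduction is needed at this level; the genuine labour has already been discharged in those auxiliary statements, and the theorem is essentially their bidirectional closure.

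For the forward direction, $(\SAcBes(s) \vee \SBcAes(s)) "=>" \mathsf{SPE}(s)$, I would argue by cases on the disjunction. If $\SAcBes(s)$ holds, then since $\SOI(s) \vee \SIO(s)$ is in force, the Proposition asserting $`A s, (\SOI(s) \vee \SIO(s)) "=>" (\SAcBes(s) "=>" \mathsf{SPE}(s))$ yields $\mathsf{SPE}(s)$ at once. If instead $\SBcAes(s)$ holds, the symmetric Proposition $`A s, (\SOI(s) \vee \SIO(s)) "=>" \SBcAes(s) "=>" \mathsf{SPE}(s)$ yields $\mathsf{SPE}(s)$. In either case the side condition $\SOI(s) \vee \SIO(s)$ is precisely what both Propositions require, so this implication closes with no extra work.

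For the reverse direction, $\mathsf{SPE}(s) "=>" (\SAcBes(s) \vee \SBcAes(s))$, nothing remains to be done: this is verbatim the content of the preceding Lemma, whose hypothesis is exactly ``$\SOI(s)$ or $\SIO(s)$''. Chaining the forward and reverse implications therefore gives the biconditional under the stated guard, and the theorem follows by universal generalisation over $s$.

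The part I would expect to carry the real difficulty is not in this final assembly but in the Lemma feeding the reverse direction, where one must rule out every way in which an \textsf{SPE} could fail to be of the form $\SAcBes$ or $\SBcAes$ --- in particular the possibility $s_{\Box r} \precsim s$ that both agents continue forever, which is excluded because $\neg \mathsf{SPE}(s_{\Box r})$, together with the two ``two consecutive stops'' patterns that each contradict the local inequality in $\mathsf{PE}$. Once that case analysis is in hand, the theorem itself is only a matter of combining the established implications and amounts to bookkeeping.
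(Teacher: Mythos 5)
Your proposal is correct and matches the paper's own treatment: the paper states this theorem without a separate proof, precisely because it is the immediate combination of the two preceding Propositions (giving $\SAcBes(s) "=>" \mathsf{SPE}(s)$ and $\SBcAes(s) "=>" \mathsf{SPE}(s)$ under the guard $\SOI(s) \vee \SIO(s)$) with the preceding Lemma (giving the converse), exactly as you assemble it. Your closing remark is also accurate --- the substantive work lives in the Lemma's case analysis ruling out two consecutive stops and the profile $s_{\Box r}$, not in this final bookkeeping step.
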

\section{Nash equilibria}
\label{sec:Nash}
Before talking about escalation, let us see the connection between subgame perfect
equilibrium and Nash equilibrium in a sequential game.  In \cite{osborne04a}, the
definition of a Nash equilibrium is as follows: \textit{A Nash equilibrium is
  a``pattern[s] of behavior with the property that if every player knows every other
  player's behavior she has not reason to change her own behavior''} in other words,
\textit{``a Nash equilibrium [is] a strategy profile from which no player wishes to
  deviate, given the other player's strategies.'' }.  The concept of deviation of
agent $p$ is expressed by a binary relation we call
\emph{convertibility}\footnote{This should be called perhaps \emph{feasibility}
  following~\cite{rubinstein06:microec} and
  \cite{lescanne09:_feasib_desir_games_normal_form}} and we write \convp.  It is
defined inductively as follows:

    \[ \prooftree s\sbis s'
    \justifies s \convp s'
    \endprooftree
    \]

 \[
    \prooftree s_1 \convp s_1'\qquad s_2 \convp s_2' %
    \justifies \prof{p, c, s_1, s_2} \ \convp \ \prof{p, c', s_1', s_2'}
    \endprooftree
    \]

    \[
    \prooftree 
  s_1 \convp s_1'\qquad s_2 \convp s_2' %
  \justifies \prof{p', c, s_1, s_2 } \ \convp \ \prof{p', c, s_1', s_2'}
  \endprooftree
  \]

We define the predicate \textsf{Nash} as follows:
\[\mathsf{Nash}(s) "<=>" `A p,`A s', s \convp s' "=>" \widehat{s}(p) \ge
\widehat{s'}(p').\]

The concept of Nash equilibrium is more general than that of subgame perfect
equilibrium and we have the following result:
\begin{proposition}
  $\mathsf{SPE}(s) "=>" \mathsf{Nash}(s)$.
\end{proposition}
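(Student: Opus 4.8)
The plan is to reduce the proposition to a single lemma about convertibility and then read off the Nash condition directly. Concretely, I would prove, for the agent $p$ carried by the relation $\convp$, that \emph{if $\mathsf{SPE}(s)$ and $s \convp s'$ then $\widehat{s}(p)\ge\widehat{s'}(p)$} (whenever $\widehat{s'}$ is defined). Granting this, the definition of \textsf{Nash} is immediate: for $\mathsf{SPE}(s)$ and any agent $p$ and any $s'$ with $s\convp s'$, the lemma is exactly the Nash inequality $\widehat{s}(p)\ge\widehat{s'}(p)$. Observe first that $\mathsf{SPE}(s)=\Box\,\mathsf{PE}(s)$ entails $\mathsf{PE}(s)$, hence $\Conv{s}$, so $\widehat{s}$ and the payoffs of all its subprofiles are defined; the only delicate quantity is $\widehat{s'}$, discussed at the end. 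I would also record a fact used repeatedly: unfolding the coinductive definition of $\Box$, $\mathsf{SPE}(\prof{q,c,s_1,s_2})$ yields $\mathsf{SPE}(s_1)$ and $\mathsf{SPE}(s_2)$, so the equilibrium hypothesis propagates to immediate subprofiles and supplies the premises needed by the induction hypothesis.

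Since $\convp$ is defined \emph{inductively} by three rules, I would prove the lemma by rule induction on the derivation of $s\convp s'$. In the base rule $s\sbis s'$, bisimilar profiles have the same structure and hence the same payoff function, so $\widehat{s}=\widehat{s'}$ and the inequality holds with equality. In the rule for a node owned by another agent $p'$ (where the choice is kept fixed), we have $s=\prof{p',c,s_1,s_2}$ and $s'=\prof{p',c,s_1',s_2'}$ with $s_1\convp s_1'$ and $s_2\convp s_2'$. By definition of $\widehat{\cdot}$, both $\widehat{s}$ and $\widehat{s'}$ follow the same $c$-branch, so applying the induction hypothesis to the relevant subderivation ($s_1\convp s_1'$ if $c=d$, $s_2\convp s_2'$ if $c=r$), together with $\mathsf{SPE}$ of that subprofile, gives $\widehat{s}(p)\ge\widehat{s'}(p)$ at once.

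The crucial case is the rule where the deviating agent $p$ herself changes her choice: $s=\prof{p,c,s_1,s_2}$ and $s'=\prof{p,c',s_1',s_2'}$ with $s_1\convp s_1'$, $s_2\convp s_2'$, and $c$ possibly different from $c'$. Here I invoke $\mathsf{PE}(s)$: whichever branch $c$ is actually taken, the equilibrium condition forces both $\widehat{s}(p)\ge\widehat{s_1}(p)$ and $\widehat{s}(p)\ge\widehat{s_2}(p)$ (if $c=d$ then $\widehat{s}(p)=\widehat{s_1}(p)\ge\widehat{s_2}(p)$, and symmetrically $\widehat{s}(p)=\widehat{s_2}(p)\ge\widehat{s_1}(p)$ if $c=r$). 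Now whatever $c'$ the deviation selects, $\widehat{s'}$ equals $\widehat{s_1'}$ or $\widehat{s_2'}$; the induction hypothesis on $s_1\convp s_1'$ (resp. $s_2\convp s_2'$) gives $\widehat{s_1}(p)\ge\widehat{s_1'}(p)$ (resp. $\widehat{s_2}(p)\ge\widehat{s_2'}(p)$), and chaining this with the corresponding inequality above yields $\widehat{s}(p)\ge\widehat{s'}(p)$. This is precisely the game-theoretic content: an equilibrium choice already dominates both subprofile payoffs, so no unilateral change by $p$ can improve on it.

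The main obstacle I expect is not the inductive bookkeeping but the \emph{partiality of $\widehat{\cdot}$}: a deviation $s'$ may fail to converge (the agent escalates), so $\widehat{s'}(p)$ need not be defined and the inequality $\widehat{s}(p)\ge\widehat{s'}(p)$ must be read with care. The clean way to handle this is to restrict the Nash comparison to convergent deviations, equivalently to treat an undefined deviation payoff as never beating the equilibrium payoff, so that the lemma is stated and applied only where $\widehat{s'}$ is defined. This is exactly the point at which the infinite, coinductive setting departs from the finite one, and it should be made explicit rather than glossed over.
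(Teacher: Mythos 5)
Your argument is correct, but there is little in the paper to compare it against: the paper states the proposition and immediately defers to a COQ formalization (the scripts accompanying \cite{DBLP:journals/acta/LescanneP12}), giving no textual proof at all. Your proposal therefore fills a real gap, and it does so in what is surely the intended way: since $\convp$ is the only inductively defined object in sight, rule induction on the derivation of $s \convp s'$ is the canonical strategy, with $\mathsf{SPE} = \Box\,\mathsf{PE}$ unfolded one level at a time to supply the hypotheses at subprofiles. All three cases check out: bisimilar profiles have equal (partial) payoff functions; at a node owned by another agent the choice is unchanged, so the payoff follows the same branch and the induction hypothesis applies directly; at a node owned by $p$ itself, $\mathsf{PE}(s)$ gives $\widehat{s}(p) \ge \widehat{s_d}(p)$ and $\widehat{s}(p) \ge \widehat{s_r}(p)$ (both payoffs being defined because $\Conv{s}$ passes to subprofiles), and chaining with the induction hypothesis on whichever branch the deviation selects closes the case. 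Your final caveat is also well taken and worth stating loudly: as printed, the paper's definition of $\mathsf{Nash}$ compares $\widehat{s}(p)$ with $\widehat{s'}(p')$ (the $p'$ is evidently a typo for $p$) and quantifies over all convertible $s'$, including non-convergent ones for which $\widehat{s'}$ is undefined; the proposition is only meaningful if the Nash inequality is read as restricted to deviations with defined payoff (equivalently, an undefined deviation payoff never beats the equilibrium payoff), which is how the formal development underlying \cite{DBLP:journals/acta/LescanneP12} handles it. With that reading made explicit, your proof is complete and is, in essence, the paper-level rendering of the machine-checked proof the paper points to.
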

The result has been proven in COQ and we refer to the script
(see\cite{DBLP:journals/acta/LescanneP12}):

\centerline{\url{http://perso.ens-lyon.fr/pierre.lescanne/COQ/EscRatAI/}}

\centerline{\url{http://perso.ens-lyon.fr/pierre.lescanne/COQ/EscRatAI/SCRIPTS/}}

Notice that we defined the
convertibility inductively, but a coinductive definition is possible.  But this would
give a  more restrictive definition of Nash equilibrium.

\section{Escalation}
\label{sec:escalation}

Escalation in a game with a set $\textsf{P}$ of agents occurs when there is a tuple
of consistent strategies $(st_p)_{p`:\Player}$ such that its sum is not convergent,
in other words, $\neg~ \conv{\displaystyle (\bigoplus_{p`:\Player} st_p)}$. Said
differently, it is possible that the agents have all a private strategy which
combined with those of the others makes a strategy profile which is not convergent,
which means that the strategy profile goes to infinity when following the choices.
Notice the two uses of a strategy profile: first, as a subgame perfect equilibrium,
second as a combination of the strategies of the agents.

Consider the strategy:
\begin{eqnarray*}
  st_{\Al,\infty} &=& \strat{r, \strat{f_{0,1}}, st_{\Al,\infty}'}\\
  st_{\Al,\infty}' &=& \strat{\Be, \strat{f_{1,0}}, st_{\Al,\infty}}
\end{eqnarray*}
and its twin
\begin{eqnarray*}
  st_{\Be,\infty} &=& \strat{\Al, \strat{f_{0,1}}, st_{\Be,\infty}'}\\
  st_{\Be,\infty}' &=& \strat{r, \strat{f_{1,0}}, st_{\Be,\infty}}.
\end{eqnarray*}
Moreover, consider the strategy profile:
\begin{eqnarray*}
  s_{\Al, \infty} & = & \prof{\All, r, \prof{f_{0,1}}, s_{\Be, \infty}}\\
  s_{\Be, \infty} & = & \prof{\Bel, r, \prof{f_{1,0}}, s_{\Al, \infty}}.
\end{eqnarray*}
\begin{proposition}~\label{prop:escal}
  \begin{enumerate}
  \item $st_{\Al,\infty}\full{\textsf{\scriptsize \sf A}}$,
 \item $st_{\Be,\infty}\full{\textsf{\scriptsize \sf B}}$,
  \item $\mathsf{st2g}(st_{\Al,\infty}, \Al) = \mathsf{st2g}(st_{\Be,\infty}, \Be) = \GOI,$
  \item $\mathsf{game}(s_{\Al, \infty}) = \GOI$,
  \item $st_{\Al,\infty} \oplus st_{\Be,\infty} = s_{\Al, \infty}$,
  \item $\neg\ \conv{s_{\Al, \infty}}$.
  \end{enumerate}
\end{proposition}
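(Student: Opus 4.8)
The plan is to dispatch the six items in dependency order: items 1--4 record the fullness and underlying-game data that make the sum in item~5 legal, item~5 evaluates that sum, and item~6 is the substantive claim. Items 1--5 are all statements about infinite objects living in a final coalgebra, so each is proved by coinduction, equivalently by uniqueness of solutions of guarded corecursive equations (for profiles this is the bisimilarity $\sbis$, which the paper identifies with equality). Item~6, by contrast, is the \emph{negation} of an inductively defined predicate and is settled by a short well-founded argument.

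For items 1 and 2 I would verify the coinductive clauses of ``full for'' directly on the two-element invariant $\{st_{\Al,\infty}, st_{\Al,\infty}'\}$. Unfolding $st_{\Al,\infty}=\strat{r,\strat{f_{0,1}}, st_{\Al,\infty}'}$: the head $r$ is a choice, so the side condition is vacuous, the left child $\strat{f_{0,1}}$ is full by the base clause, and the right child $st_{\Al,\infty}'$ lies in the invariant. Unfolding $st_{\Al,\infty}'=\strat{\Be,\strat{f_{1,0}}, st_{\Al,\infty}}$: the head $\Be$ is an agent with $\Be\neq\Al$, and both children are full or in the invariant. The invariant is therefore closed under the clauses, so $st_{\Al,\infty}$ is full for $\Al$; item~2 is the mirror image with $\{st_{\Be,\infty}, st_{\Be,\infty}'\}$, the literal agent and the choice exchanging their nodes.

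For items 3 and 4 I would merely unfold $\mathsf{st2g}$ and $\mathsf{game}$. Writing $G=\mathsf{st2g}(st_{\Al,\infty},\Al)$ and $G'=\mathsf{st2g}(st_{\Al,\infty}',\Al)$, the head $r\notin\Player$ gives $G=\game{\Al,\game{f_{0,1}},G'}$ and the head $\Be\in\Player$ gives $G'=\game{\Be,\game{f_{1,0}},G}$; these are exactly the equations defining $\GOI$ and $\GIO$, so $G=\GOI$. The computation of $\mathsf{st2g}(st_{\Be,\infty},\Be)$ is identical once one notes that now $\Al\in\Player$ supplies the first node's agent and $r\notin\Player$ supplies the second, again producing $\GOI$; and item~4 is the same unfolding of $\mathsf{game}$ applied to $s_{\Al,\infty}$ and $s_{\Be,\infty}$. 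Items 1--3 show that $st_{\Al,\infty}$ and $st_{\Be,\infty}$ form a consistent family, so in item~5 the sum $\oplus$ is defined; unfolding it at the first node ($\Al$ supplying the choice $r$, $\Be$ supplying the agent $\Al$) and at the second node ($\Be$ supplying $r$, $\Al$ supplying the agent $\Be$) yields $st_{\Al,\infty}\oplus st_{\Be,\infty}=\prof{\All,r,\prof{f_{0,1}}, st_{\Al,\infty}'\oplus st_{\Be,\infty}'}$ together with $st_{\Al,\infty}'\oplus st_{\Be,\infty}'=\prof{\Bel,r,\prof{f_{1,0}}, st_{\Al,\infty}\oplus st_{\Be,\infty}}$, which are precisely the equations defining $s_{\Al,\infty}$ and $s_{\Be,\infty}$; uniqueness then gives item~5.

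The crux is item~6, and it is exactly where the inductive status of $\conv{\cdot}$ does the work. Since $\conv{\cdot}$ is defined inductively --- as the \emph{least} predicate closed under its clauses --- any valid $\conv{s_{\Al,\infty}}$ would carry a finite, well-founded derivation. But $s_{\Al,\infty}=\prof{\All,r,\prof{f_{0,1}}, s_{\Be,\infty}}$ has head choice $r$, so the only clause that can conclude $\conv{s_{\Al,\infty}}$ has premise $\conv{s_{\Be,\infty}}$, and symmetrically $\conv{s_{\Be,\infty}}$ can only be concluded from $\conv{s_{\Al,\infty}}$. A derivation of $\conv{s_{\Al,\infty}}$ would thus strictly contain another derivation of $\conv{s_{\Al,\infty}}$, which is impossible; I would phrase this as an induction on the supposed derivation showing that none exists, or dually observe that $\{s_{\Al,\infty},s_{\Be,\infty}\}$ is stable under ``follow the $r$-choice'' and contains no leaf, hence is disjoint from the least fixed point. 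Therefore $\neg\,\conv{s_{\Al,\infty}}$, which is the escalation the construction is designed to exhibit. The only real care needed anywhere is to respect this inductive reading: taken coinductively, the same clauses would make $\{s_{\Al,\infty},s_{\Be,\infty}\}$ a post-fixed point and spuriously ``converge'', and it is precisely this inductive/coinductive distinction that the whole paper turns on.
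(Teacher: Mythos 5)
Your proposal is correct and takes essentially the same route as the paper's own (one-line) proof: items 1--5 by coinduction on the definitions of fullness, \textsf{st2g}, \textsf{game}, $\oplus$ and the corecursive objects involved, and item 6 by induction on the inductive definition of $\downarrow$, i.e.\ the well-foundedness of any supposed derivation of $\conv{s_{\Al,\infty}}$. You merely supply the details (explicit invariants, uniqueness of solutions of guarded equations in the final coalgebra) that the paper leaves implicit.
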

\begin{proof}
  The first statements are proved by coinduction on the definition of
  $\full{\textsf{\scriptsize \sf A}}$, \textsf{st2g}, \textsf{game} and
  $st_{\Al,\infty}$.  The last statement is by induction on the definition of
  $\downarrow$.
\end{proof}

Proposition~\ref{prop:escal} can be said in words as follows:
\begin{enumerate}
\item $st_{\Al,\infty}$ is full for $\Al$,
\item $st_{\Be,\infty}$ is full for $\Be$,
\item  $st_{\Al,\infty}$ and  $st_{\Be,\infty}$ have game $\GOI$,
\item $s_{\Al, \infty}$ has game $\GOI$,
\item strategy  $st_{\Be,\infty}$ plus strategy $st_{\Be,\infty}$ yields profile
  $s_{\Al, \infty}$,
\item profile  $s_{\Al, \infty}$ is not convergent.
\end{enumerate}

$ st_{\Al,\infty}$ and $ st_{\Be,\infty}$ are both correct since they are built
using choices, namely~$r$, dictated by subgame perfect equilibria\footnote{Recall
  that our concept of intelligent choice is that of a subgame perfect equilibrium, as it
  generalizes backward induction, which is accepted following Aumann~\cite{aumann95}
  as the criterion of rationality for finite game.}  which start with $r$.  Another
feature of $0,1$-game is that no agent has a clue for what strategy the other agent
is using.  Indeed after $k$ steps, $\All$ does not know if $\Bel$ has used a strategy
derived of equilibria in $\SAcBes$ or in $\SBcAes$.  In other words, $\All$ does not
know if $\Bel$ will stop eventually or not and vice versa. The agents can draw no
conclusion of what they observe.  If each agent does not believe in the threat of the
other she is naturally led to escalation.

\section{Relativity: are agents really rational?}
\label{sec:really}

\rightline{\parbox{8cm}{\begin{it} Observers are less cognitively busy and more open
      to information than actors.
    \end{it}}}  %
  \medskip%
  \rightline{Daniel Kahneman~\cite{kahneman11:_think_fast_slow}}

\medskip

Rationality as observed by an \emph{outsider} is not the same as rationality seen by
an
\emph{insider}~\cite{kahneman11:_think_fast_slow,klapisch11:_my_piece_pie,chandor12:_margin_call}.
Since
coalgebras are the mathematical tool for observation~\cite{jacobs12:_introd_coalg} a
coalgebra approach does not come as a surprise.

In this paper we would like to use the following definition: ``An agent is rational\footnote{seen
  from inside} if she is motivated by maximizing her own payoff'', but since this leads to
debate, we prefer to say that an agent is intelligent in this case.   In infinite
extensive games, this translates in saying that an agent is intelligent if she adopts a
strategy consistent with a subgame perfect equilibrium as an extension of backward
induction.  This explains the behavior of traders especially when they enter
escalation.  But is this compatible with common sense?  More precisely whereas Howie
Hubler (who lost $8.67\times 10^{9}\$$ for Morgan Stanley), J\'er\^ome Kerviel (who
lost $6.95\times 10^{9}\$$ for Soci\'et\'e G\'en\'erale) or Brian Hunter (who lost
$6.69\times 10^{9}\$$ for Amaranth Advisors) were intelligent agents when they acted,
they are seen clearly as stupid by an external observer.  In other words, agents
reason intelligently in an irrational escalation, which means that an agent can be
rational in her closed world and seen irrational from outside.  Is
this consistent?  Yes and this allows us drawing three conclusions:
\begin{itemize}
\item Insider view differs from outsider observation.
\item According to
  K. E. Stanovich~\cite{stanovich2010intelligence,stanovich11:_ration}, there are
  \emph{two levels in Kahneman System~II} (the effortful and slow part of the
  mind)~\cite{kahneman11:_think_fast_slow}: \emph{algorithmic mind} which is the
  ability of agents to reason perfectly and logically in a deductive
  system\footnote{In infinite game theory, the deductive system includes coinduction
    and \textsf{SPE}'s.} (called \emph{mindware} by Keith Stanovich) and
  \emph{reflective mind} (or \emph{epistemic mind}) which is the ability of an agent
  to reconsider her believes.  For instance, reflective mind is visible when the
  agent realizes that the world (the game) is finite or when the agent realizes that
  maximizing her own profit is no more the main aim and that the survival of her
  company should be taken into account. In both cases, she changes her belief in an
  infinite world or in maximizing her own profit and adds or modifies one or more
  axiom(s) founding her deductive system, as part of her belief.
\item Consistently with the previous statement, an agent who is involved in an
  escalation can be considered as having a \emph{short term vision}, whereas the
  observer has a \emph{long term vision}.  The agent can also be considered as having
  a local vision, whereas the observer has a \emph{global vision}.
\end{itemize}

\section{Mindware revisited}

We propose to extend the concept of mindware proposed by Stanovich which itself
refines system II (slow thinking) of Kahneman~\cite{kahneman11:_think_fast_slow}.
For Stanovich, the deductive system of the mindware is always the same, namely
\emph{classical first order logic}, only its implementation in the mind of the agent
may vary and can be incomplete.  The reasoning of the mindware relies on believes that
can be changed by the reflective mind.  For us, the mindware may implement several
kinds of deductive systems and may change from one agent to the other. For instance,
it can implement classical first order logic, intuitionistic first order logic,
classical or intuitionistic first order logic extended with inductive reasoning or
extended with coinductive reasoning, higher order logic (intuitionistic or classical)
etc.  All those logics are equally acceptable, because they are consistent and from G\"odel theorem we know that no universal
system of deduction exists.  Like for Stanovich, believes may also be changed by the
reflective mind.  In his book \emph{What Intelligence Tests Miss: The Psychology of
  Rational Thought}~\cite{stanovich2010intelligence} Stanovich presents a few
examples to set his point. The fact that the mindware may implement several deductive
systems leads to reconsider the scope of his examples. This is the case for the
example that illustrates the beginning of chapter six:
\begin{em}
  \begin{quotation}
    Jack is looking at Anne but Anne is looking at George. Jack is married but George
    is not.  Is a married person looking at a unmarried person?

    A) Yes \quad B) No \quad C) Cannot be determined

    Answer A, B, or C. 

    [...]

    The vast majority of people answer C.
  \end{quotation}
\end{em}
In others words, the vast majority of people, me included, show on this example that
they have a correct mindware based on a constructive logic, for instance, on
intuitionistic logic, because such a logic is easier to use.  Therefore they
answer~C.  Stanovich claims that such an answer is incorrect and calls us
\emph{cognitive misers}.  Actually when reading
further~\cite{stanovich2010intelligence} I understood that I was supposed to use a
more sophisticated logic, especially that I should use the \emph{excluded middle}, in
other words, that I should use classical logic, because\emph{ ``most people can carry
  fully disjunctive reasoning when they are explicitly \emph{told} that it is
  necessary'' }(\cite{stanovich2010intelligence} p.~71).  Therefore I changed my
belief by adding \emph{p~or not p} and I answered~A.  But this requires more
calculation.  Indeed \emph{p or not~p} can be specialized into \emph{Anne is married
  or Anne is not married} from which we draw \emph{Anne is married and George is not
  married and Anne is looking at George or Jack is married and Anne is not married
  and Jack is looking a Anne}.  We can abstract this into \emph{there exists x and
  there exists y such that x is married and y is not married and x is looking at y or
  there exists x and there exists y such that x is married and y is not married and x
  is looking at y} which simplifies into \emph{there exists x and there exists y such
  that x is married and y is not married and x is looking at y} and leads to answer
A.  However if the above question is completed by
\begin{em}
  \begin{quotation}
    Choose  one of the followings

    A) Anne is married 

    B) Anne is not married 

    C) I don't know.
  \end{quotation}
\end{em}

A rational person including a person who answered A at the first question will answer
C at this question.  More precisely, answering A at the first question does not allow the agent to justify
her answer by exhibiting a pair of persons such that one is married and looking at
the other who is not married. Unable to justify their answer, the
majority of people choose C at the first question as well. 

To introduce disrationality, chapter two of~\cite{stanovich2010intelligence} starts
with the case of John Allen Paulos who was involved in a classical
escalation~\cite{paulos03:_mathem_plays_stock_market}.  Actually John Allen Paulos,
professor of mathematics at Temple University, is a typical person with a sound
algorithmic mind using coinduction knowingly or not\footnote{Actually Paulos uses
  coinduction unknowingly, rather invoking a kind of invariant}.  Actually he was
disrational because he used improperly his reflective mind and did not change his
belief in an ``everlasting'' Worldcom company (which eventually bankrupted in 2002)
and in its eventual restart.  Paulos is obviously ``foolish'', if we say that someone
is foolish, if he has a perfect algorithmic mind, but a faulty reflective mind, in
other words if he is intelligent, but not rational.

The above comments strengthen Stanovich's distinction between algorithmic mind and
reflective mind, but make the delineation of rationality harder and its evaluation
difficult, because attributing a ``rationality quotient'' requires first to determine
the deduction system implemented in the mindware and its strength, then to appreciate
the ability of the agent of changing her believes.

\section{Which deductive system?}
\label{sec:which}

\rightline{\parbox{8cm}{\begin{it} I say that it is not illogical to think that the world is infinite.
    \end{it}}}  %
  \medskip%
  \rightline{Jose Luis Borges, \emph{The library of Babel} in~\cite{borges41:_ficcion}}.

\medskip

We noticed that several deductive systems can be used.  We may wonder what features a
deductive system should have. Let us tell some of them. First the language should contain a modality to enable
agents to express \emph{belief} ($B_a$) or \emph{knowledge}~($K_a$).  Moreover, the
excluded middle ($p\vee \neg p$) is clearly not mandatory. However an agent $a$  should
be able to state that she believes in the excluded middle by a statement like
\begin{displaymath}
  B_a \mathbf{(}(`A p: \mathsf{Proposition})\ p\vee \neg p\mathbf{)}.
\end{displaymath}
This requires a quantification over propositions, which allows also expressing
sentences like: \emph{We know there are known unknowns}
(see~\cite{DBLP:journals/amai/Lescanne06} Section~4):
\begin{displaymath}
 \bigwedge_{a`:\mathsf{Agent}} K_a((`E p: \mathsf{Proposition})\ K_a (\neg K_a(p))).
\end{displaymath}
Besides quantifications over propositions, quantifications over functions and sets
are required to express belief in \emph{finiteness} (\cite{DBLP:journals/corr/abs-1112-1185} Section~6):
\begin{displaymath}
  B_a((`A A: \emph{Set})\,(`A f: A "->" A)\, (Surjective (f) "=>" Injectve(f)))
\end{displaymath}
where \textit{Surjective} is a predicate which asserts \emph{surjectivity} and \textit{Injective}
is a predicate which asserts \emph{injectivity}.



\section{Conclusion}
\label{sec:concl}

In this paper, we have shown how to use coinduction in economics, more precisely in
economic game theory where it has not been used yet, or perhaps in a hidden form,
which has to be unearthed.  We have shown also that rational agents can be seen as
irrational by observers since observation changes the point of view, in particular on
rationality.  When Wolfgang Leininger writes (see citation in front of
Section~\ref{sec:escal}) that the fact that rational agents should not engage in an
escalation seems obvious, he means, ``obvious'' for an observer, not for the agents,
or perhaps he should have said that rational agents should not engage in an
escalation, but that intelligent agents could.  If agents are only \emph{intelligent},
the efficiency of the markets should then be revisited at the light of
escalation. Therefore coinduction is a possible way for rethinking economics.


\appendix

\section{Finite 0,1 games and the ``cut and extrapolate'' method}
\label{sec:finite}

We spoke about the ``cut and extrapolate'' method, applied in particular to the
dollar auction.  Let us see how it would work on the 0,1-game.  Finite games, finite
strategy profiles and payoff functions of finite strategy profiles are the inductive
equivalent of infinite games, infinite strategy profiles and infinite payoff
functions which we presented.  Notice that payoff functions of finite strategy
profiles are always defined.  Despite we do not speak of the same types~\footnote{In
  the sense of type theory} of objects, we use the same notations, but this does not
lead to confusion.  Consider two infinite families of finite games, that could be
seen as approximations of the 0,1-game: \iffullpage
\begin{displaymath}
\begin{array}{c@{\quad~~}c}
\begin{array}{lcl}
  F_{0,1} &=& \game{\Al, \game{f_{0,1}}, \game{\Be, \game{f_{1,0}}, F_{0,1}}}  \cup \{\game{f_{0,1}}\}
\end{array}
&%
\begin{array}{lcl}
  K_{0,1} &=& \game{\Al, \game{f_{0,1}}, K_{0,1}'}\\
  K_{0,1}' &=& \game{\Be, \game{f_{1,0}}, K_{0,1}} \cup \{\game{f_{1,0}}\}
\end{array}
\end{array}
\end{displaymath}
\else
\begin{displaymath}
\begin{array}{lcl}
  F_{0,1} &=& \game{\Al, \game{f_{0,1}}, \game{\Be, \game{f_{1,0}}, F_{0,1}}}  \cup \{\game{f_{0,1}}\}
\end{array}
\end{displaymath}
\begin{displaymath}
\begin{array}{lcl}
  K_{0,1} &=& \game{\Al, \game{f_{0,1}}, K_{0,1}'}\\
  K_{0,1}' &=& \game{\Be, \game{f_{1,0}}, K_{0,1}} \cup \{\game{f_{1,0}}\}
\end{array}
\end{displaymath}

\fi In $F_{0,1}$ we cut after $\Bel$ and replace the tail by $\game{f_{0,1}}$.  In
$K_{0,1}$ we cut after $\All$ and replace the tail by $\game{f_{1,0}}$.
Recall~\cite{vestergaard06:IPL} the predicate \emph{backward induction} shortened in
\textsf{BI}, which is the finite and inductive version of~\textsf{PE}.
\begin{eqnarray*}
  \mathsf{BI}(\game{f})\\
  \mathsf{BI}(\game{p, c, s_d, s_r}) &=& \mathsf{BI}(s_l) \wedge \mathsf{BI}(s_r)
  \wedge \\
  &&\prof{p, d, s_d, s_r} "=>"
\widehat{s_d}(p) \ge \widehat{s_r}(p)\ \wedge\\ && \prof{p, r, s_d, s_r} "=>"
\widehat{s_r}(p) \ge \widehat{s_d}(p)
\end{eqnarray*}
\begin{example}
  In Example~\ref{ex:stratprof} we have $\mathsf{BI}(s_1)$, $\mathsf{BI}(s_2)$ and
  $\neg \mathsf{BI}(s_3)$.
\end{example}
We consider the two families of strategy profiles:

\begin{displaymath}
\begin{array}{c@{\qquad\quad}c}
  \begin{array}{rcl}
    \mathsf{SF}_{0,1}(s) &"<=(i)>"& (s = \prof{\All, d , \prof{f_{0,1}}, \prof{\Bel,
        r, \prof{f_{1,0}}, s'}} \wedge \mathsf{SF}_{0,1}(s'))  \quad \vee \\ 
    && (s = \prof{\All, r , \prof{f_{0,1}}, \prof{\Bel,
        r, \prof{f_{1,0}}, s'}} \wedge \mathsf{SF}_{0,1}(s'))  \quad \vee \\ 
    && s = \prof{f_{0,1}}
    \\\\
    \mathsf{SK}_{0,1}(s) &"<=(i)>"& s = \prof{\All, r, \prof{f_{0,1}}, s'} \wedge \mathsf{SK}_{0,1}'(s')\\
    \mathsf{SK}_{0,1}'(s) &"<=(i)>"& (s = \prof{\Bel, d , \prof{f_{1,0}}, s'} \vee s = \prof{\Bel, r , \prof{f_{1,0}}, s'}) \wedge
    \mathsf{SK}_{0,1}(s') \quad \vee \\ &&s = \prof{f_{1,0}}
  \end{array}
\end{array}
\end{displaymath}
In $\mathsf{SF}_{0,1}$, $\Bel$ continues and $\All$ does whatever she likes and in
$\mathsf{SK}_{0,1}$, $\All$ continues and $\Bel$ does whatever she likes.  The
following proposition characterizes the backward induction equilibria for games in
$F_{0,1}$ and $K_{0,1}$ respectively and is easily proved by induction:
\begin{proposition}~
  \begin{itemize}
  \item $\mathsf{game}(s) `:F_{0,1} \wedge \mathsf{SF}_{0,1}(s) "<=>" \mathsf{BI}(s)$,
  \item $\mathsf{game}(s) `: K_{0,1}  \wedge \mathsf{SK}_{0,1}(s) "<=>" \mathsf{BI}(s)$.
  \end{itemize}
\end{proposition}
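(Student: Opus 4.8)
The plan is to prove each equivalence by induction on the finite depth of the game in its family, handling the two implications separately and resting the whole argument on a single payoff observation. First I would record the shape that membership forces on $s$. If $\mathsf{game}(s)\in F_{0,1}$ then either $s=\prof{f_{0,1}}$ or $s=\prof{\All,c_1,\prof{f_{0,1}},\prof{\Bel,c_2,\prof{f_{1,0}},s'}}$ with $\mathsf{game}(s')\in F_{0,1}$, so the only freedom is in the choices $c_1,c_2$ and the shorter tail $s'$. Dually, $\mathsf{game}(s)\in K_{0,1}$ forces $s=\prof{\All,c_0,\prof{f_{0,1}},s'}$ with $\mathsf{game}(s')\in K_{0,1}'$, and $\mathsf{game}(s')\in K_{0,1}'$ forces $s'=\prof{f_{1,0}}$ or $s'=\prof{\Bel,c,\prof{f_{1,0}},s''}$ with $\mathsf{game}(s'')\in K_{0,1}$. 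These shape facts are what the induction hypothesis will be applied to.

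The key lemma I would isolate is payoff constancy: every $s$ with $\mathsf{SF}_{0,1}(s)$ has $\widehat{s}=f_{0,1}$, and every $s$ with $\mathsf{SK}_{0,1}(s)$ or $\mathsf{SK}_{0,1}'(s)$ has $\widehat{s}=f_{1,0}$. The reason is structural: under $\mathsf{SF}_{0,1}$ the player $\Bel$ always continues, so the branch selected by $\widehat{\cdot}$ can only end at an $f_{0,1}$-leaf (an $\All$-down leaf or the terminal $\prof{f_{0,1}}$), while under $\mathsf{SK}_{0,1}$ it is $\All$ who always continues, so only $f_{1,0}$-leaves are reached. I would prove this by the same induction, reading $\widehat{s}$ off the chosen branch at each step.

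With the lemma in hand, the forward direction $\mathsf{SF}_{0,1}(s)\Rightarrow\mathsf{BI}(s)$ (and its $K$ analogue) reduces to the two local inequalities of $\mathsf{BI}$ at every node, each of which collapses to a comparison of $0$ and $1$: at a $\Bel$ node of an $\mathsf{SF}_{0,1}$ profile the chosen right branch gives $\widehat{s'}(\Be)=1\ge 0=f_{1,0}(\Be)$, whereas at an $\All$ node both required inequalities are $0\ge 0$, which is exactly why $\All$ may play either $d$ or $r$; the $\mathsf{BI}$ of the two subprofiles comes from the trivial case $\mathsf{BI}(\prof{f})$ and from the induction hypothesis on the tail. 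For the converse $\mathsf{game}(s)\in F_{0,1}\wedge\mathsf{BI}(s)\Rightarrow\mathsf{SF}_{0,1}(s)$ I would apply the induction hypothesis to $s'$ (smaller game, and $\mathsf{BI}(s')$ holds because $\mathsf{BI}$ is recursive) to obtain $\mathsf{SF}_{0,1}(s')$ and hence $\widehat{s'}=f_{0,1}$; then the strict inequality $1>0$ rules out $c_2=d$ at the $\Bel$ node, since $c_2=d$ would demand $f_{1,0}(\Be)=0\ge 1=\widehat{s'}(\Be)$, so $c_2=r$ is forced while $c_1$ stays unconstrained. The $K$ case is the exact mirror image, with the same strict inequality $1>0$ forcing every $\All$ choice to be $r$ and leaving $\Bel$ free.

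I expect the only genuine bookkeeping obstacle to be the $K$ family, where the predicates $\mathsf{SK}_{0,1}$, $\mathsf{SK}_{0,1}'$ and the game families $K_{0,1}$, $K_{0,1}'$ are mutually recursive; I would therefore run that half as a simultaneous induction establishing both the $\mathsf{SK}_{0,1}$ equivalence and its primed companion, together with the corresponding payoff values, at once. One point I would state explicitly to match the wording of the proposition is that $\mathsf{SF}_{0,1}(s)$ already entails $\mathsf{game}(s)\in F_{0,1}$ (and likewise $\mathsf{SK}_{0,1}(s)$ entails $\mathsf{game}(s)\in K_{0,1}$), so the conjunction on the left of each equivalence is really just the strategy-profile predicate, and each equivalence should be read under the standing hypothesis that $\mathsf{game}(s)$ lies in the relevant family, which is what fixes the shape of $s$ throughout the induction. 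Everything else is the routine leaf-payoff computation, which is why the statement is \emph{easily proved by induction}.
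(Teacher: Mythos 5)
Your proof is correct and follows exactly the route the paper intends: the paper gives no argument beyond the remark that the proposition ``is easily proved by induction,'' and your structural induction---with the payoff-constancy lemma ($\widehat{s}=f_{0,1}$ under $\mathsf{SF}_{0,1}$, $\widehat{s}=f_{1,0}$ under $\mathsf{SK}_{0,1}$ and $\mathsf{SK}_{0,1}'$, run simultaneously for the mutually recursive $K$ family) and the $1>0$ comparison forcing the continuing player's choice while leaving the other player free---is precisely that induction carried out in full. Your remark that each equivalence must be read under the standing hypothesis $\mathsf{game}(s)\in F_{0,1}$ (resp.\ $K_{0,1}$), since $\mathsf{BI}$ alone cannot recover membership in the game family, is also the correct reading of the statement.
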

This shows that cutting at an  even or an odd position does not give the same strategy
profile by extrapolation.  Consequently the ``cut and extrapolate'' method does not
anticipate all the subgame perfect equilibria.  Let us add that when cutting we
decide which leaf to insert, namely $\game{f_{0,1}}$ or $\game{f_{1,0}}$, but we
could do another way obtaining different results.

\paragraph{0,1 game and limited payroll.}

To avoid escalation in the dollar auction, people require a \emph{limited payroll},
i.e., a bound on the amount of money handled by the agents, but this is inconsistent
with the fact that the game is infinite.  Said otherwise, to avoid escalation, they
forbid escalation.  We can notice that, in the 0,1-game, a limited payroll would not
prevent escalation, since the payoffs are anyway limited by~$1$.  In the same vein,
Demange~\cite{demange92:_ration_escal} adds, to justify escalation, a new
feature called \emph{joker} which is not necessary as we have shown.
\end{document}

